\newcommand{\N}{\mathtt{N}}
\def\join{\small\textcircled{\tiny 1}}
\def\co-join{\small\textcircled{\tiny 0}}
\begin{document}

\title{On the complete width and edge clique cover problems\thanks{Parts of this paper appeared in proceedings of the 21st Annual International Computing and Combinatorics Conference (COCOON 2015), August 4-6, 2015, Beijing, China~\cite{cocoon2015}.}
}


\author{Van Bang Le \and
        Sheng-Lung Peng
}


\institute{V.B. Le \at
              Universit\"at Rostock, Institut f\"ur Informatik, Germany\\
              \email{van-bang.le@uni-rostock.de} \\
           \and
           S.-L. Peng \at
              Department of Computer Science and Information Engineering,\\
              National Dong Hwa University, Hualien 974, Taiwan\\
              \email{slpeng@mail.ndhu.edu.tw}
}

\date{Received: date / Accepted: date}

\maketitle

\begin{abstract}
A complete graph is the graph in which every two vertices are adjacent. For a graph $G=(V,E)$, the complete width of $G$ is the minimum $k$ such that there exist $k$ independent sets $\N_i\subseteq V$, $1\le i\le k$, such that the graph $G'$ obtained from $G$ by adding some new edges between certain vertices inside the sets $\N_i$, $1\le i\le k$, is a complete graph. The complete width problem is to decide whether the complete width of a given graph is at most $k$ or not.
In this paper we study the complete width problem. We show that the complete width problem is NP-complete on $3K_2$-free bipartite graphs and polynomially solvable on $2K_2$-free bipartite graphs and on $(2K_2,C_4)$-free graphs. As a by-product, we obtain the following new results: the edge clique cover problem is NP-complete on $\overline{3K_2}$-free co-bipartite graphs and polynomially solvable on $C_4$-free co-bipartite graphs and on $(2K_2, C_4)$-free graphs. We also give a characterization for $k$-probe complete graphs which implies that the complete width problem admits a kernel of at most $2^k$ vertices. This provides another proof for the known fact that the edge clique cover problem admits a kernel of at most $2^k$ vertices. Finally we determine all graphs of small complete width $k\le 3$.

\keywords{Probe graphs \and complete width \and edge clique cover}
\end{abstract}

\section{Introduction}
Let $G=(V,E)$ be a simple and undirected graph. A subset $U\subseteq V$ is an {\em independent set}, respectively, a {\em clique} if no two, respectively, every two vertices of $U$ are adjacent. The complete graph with $n$ vertices is denoted by $K_n$. The path and cycle with $n$ vertices of length $n-1$, respectively, of length $n$, is denoted by $P_n$, respectively, $C_n$. For a vertex $v\in V$ we write $N(v)$ for the set of its neighbors in $G$. A \emph{universal} vertex $v$ is one such that $N(v)\cup\{v\}=V$.
For a subset $U\subseteq V$ we write $G[U]$ for the subgraph of $G$ induced by $U$ and $G-U$ for the graph $G[V \setminus U]$; for a vertex $v$ we write $G-v$ rather than $G- \{v\}$.

Given a graph class $\mathcal{C}$, a graph $G=(V,E)$ is called a \emph{probe $\mathcal{C}$ graph} (or \emph{$\mathcal{C}$ probe graphs\/}) if there exists an independent set $\N\subseteq V$ (\emph{i.e.}, a set of \emph{nonprobes}) and a set of new edges $E'\subseteq\binom{\N}{2}$ between certain nonprobe vertices such that the graph $G'=(V,E\cup E')$ is in
the class $\mathcal{C}$, where $\binom{\N}{2}$ stands for the set of all 2-element subsets of $\N$.
A graph $G=(V,E)$ with a \emph{given} independent set $\N\subseteq V$ is said to be a \emph{partitioned probe $\mathcal{C}$ graph} if there exists a set $E'\subseteq\binom{\N}{2}$ such that the graph $G'=(V,E\cup E')$ is in the class $\mathcal{C}$. In both cases, $G'$ is called a $\mathcal{C}$ {\em embedding} of $G$. Thus, a graph is a (partitioned) probe $\mathcal{C}$ graph if and only if it admits a $\mathcal{C}$ embedding.
The most popular case is the class $\mathcal{C}$ of interval graphs. The study of probe interval graphs was motivated from certain problems in physical mapping of DNA in the computational biology; see, {\em e.g.},~\cite{ChaChaKloPen,GolTre}.

Recently, the concept of probe graphs has been generalized as a width parameter of graph class in~\cite{ChaHunKloPen}.
Let $\mathcal{C}$ be a class of graphs.
The {\em $\mathcal{C}$-width\/} of a graph $G$
is the minimum number $k$ of independent sets
$\N_1,\dots, \N_k$ in $G$ such that there exists
an embedding $G' \in \mathcal{C}$ of $G$ such that for
every edge $xy$ in $G'$ which is not an edge of $G$ there
exists an $i$ with $x,y \in \N_i$.
A collection of such $k$ independent sets $\N_i, i = 1, \ldots, k$, is called a {\em $\mathcal{C}$ witness} for $G$.
In the case $k=1$, $G$ is a \emph{probe $\mathcal{C}$-graph}.
The {\em $\mathcal{C}$-width\/} problem asks for a given graph $G$ and an integer $k$ if the $\mathcal{C}$-width of $G$ is at most $k$. Graphs of $\mathcal{C}$-width $k$ are also called $k$-probe $\mathcal{C}$-graph. Note that graphs in $\mathcal{C}$ are, by convenience, $0$-probe $\mathcal{C}$-graphs.

In \cite{ChaHunKloPen}, the complete width and block-graph width have been investigated. The authors proved that, for fixed $k$, graphs of complete width $k$ can be characterized by finitely many forbidden induced graphs. Their proof is however not constructive. They also showed, implicitly, that complete width $k$ graphs and block-graph width $k$ graphs can be recognized in cubic time. The case $k=1$, {\em e.g.}, probe complete graphs and probe block graphs, has been discussed in depth in~\cite{LePeng12}. The case $k=2$ is discussed in~\cite{LePeng14}.

Graphs that do not contain an induced subgraph isomorphic to a graph $H$ are called \emph{$H$-free\/}.  More generally, a graph is \emph{$(H_1,\ldots, H_t)$-free\/} if it does not contain an induced subgraph isomorphic to one of the graphs $H_1,\ldots, H_t$.
For two graphs $G$ and $H$, we write $G+H$ for the disjoint union of $G$ and $H$, and for an integer $t\ge 2$, $tG$ stands for the disjoint union of $t$ copies of $G$. The complete $k$-partite with $n_i$ vertices in color class $i$ is denoted by $K_{n_1,\ldots, n_k}$.
For graph classes not defined here see, for example,~\cite{BraLeSpi,ChaChaKloPen,Golumbic}.

In this paper we study the \textsc{complete width} problem (given $G$ and $k$, is the complete width of $G$ at most $k$?). We show that
\begin{itemize}
\item \textsc{complete width} is NP-complete, even on $3K_2$-free bipartite graphs, and
\item computing the complete width of a $2K_2$-free bipartite graph (chain graph), and more generally, of a $(2K_2,K_3)$-free graph can be done in polynomial time,
\item computing the complete width of a $2K_2$-free chordal graph (split graph), and more generally, of a $(2K_2,C_4)$-free graph can be done in polynomial time,
\item \textsc{complete width} admits a kernel with at most $2^k$ vertices. That is, any instance $(G,k)$ of \textsc{complete width} can be reduced in polynomial time to an equivalent instance $(G',k')$ of \textsc{complete width} with $k'\le k$ and $G'$ has at most $2^k$ vertices. In particular, \textsc{complete width} is fixed-parameter tractable with respect to parameter $k$.
\end{itemize}
Moreover, we give structural characterizations for graphs of complete width at most $3$.

In the next section we point out a relation between complete width and the more popular notion of edge clique cover of graphs. Then we prove our results in the last four sections. As we will see, it follows from our results on complete width that edge clique cover is NP-complete on $\overline{3K_2}$-free co-bipartite graphs and is polynomially solvable on $C_4$-free co-bipartite graphs.

\section{Complete width and edge clique cover}\label{sec:ecc}
An \emph{edge clique cover} of a graph $G$ is a family of cliques (complete subgraphs) such
that each edge of $G$ is in at least one member of the family.
The minimal cardinality of an edge clique cover is the \emph{edge clique cover number}, denoted by $\theta_\mathrm{e}(G)$.

The \textsc{edge clique cover} problem, the problem of deciding if $\theta_\mathrm{e}(G)\le k$, for a given graph $G$ and an integer $k$, is NP-complete \cite{Holyer,KSW,Orlin}, even when restricted to
graphs with maximum degree at most six~\cite{Hoover}, or planar graphs \cite{CM}.
\textsc{edge clique cover} is polynomially solvable for graphs with maximum degree at most five~\cite{Hoover}, for line graphs~\cite{Orlin,Pullman}, for chordal graphs~\cite{MWW,Ray}, and for circular-arc graphs~\cite{HsuTsa}.

In~\cite{KSW} it is shown that approximating the edge clique covering number within a constant factor smaller than two is NP-hard. In \cite{GGHN}, it is shown that \textsc{edge clique cover} is fixed-parameter tractable with respect to parameter $k$; see also \cite{CKPPW,CPP} for more recent discussions on the parameterized complexity aspects.

We write $cow(G)$ to denote the complete width of the graph $G$. As usual, $\overline{G}$ denotes the complement of $G$. In \cite{ChaHunKloPen}, the authors showed that \textsc{complete width} is NP-complete on general graphs, by observing that

\begin{proposition}[\cite{ChaHunKloPen}]\label{prop:cow-ecc}
For any graph $G$,
$cow(G)=\theta_\mathrm{e}(\overline{G})$
\end{proposition}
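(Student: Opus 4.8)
The plan is to exhibit a bijection between edge clique covers of $\overline{G}$ and complete witnesses for $G$ of the same size. The two halves of the claimed equality, $cow(G)\le\theta_\mathrm{e}(\overline{G})$ and $\theta_\mathrm{e}(\overline{G})\le cow(G)$, will follow from the two directions of this correspondence. The key observation tying the two notions together is that a set $S\subseteq V(G)$ is independent in $G$ if and only if $S$ is a clique in $\overline{G}$, and that a pair $xy$ is a non-edge of $G$ — hence a pair that an embedding is allowed to add — precisely when $xy$ is an edge of $\overline{G}$.

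First I would take an edge clique cover $Q_1,\dots,Q_k$ of $\overline{G}$ with $k=\theta_\mathrm{e}(\overline{G})$ and set $\N_i:=Q_i$ for each $i$. Each $\N_i$ is a clique in $\overline{G}$, hence an independent set in $G$, so these are legitimate candidates for a complete witness. To verify they form a complete witness, let $G'=(V,E\cup E')$ where $E'$ consists of all non-edges of $G$; then $G'=K_n$ is a complete embedding of $G$, and I must check that every edge $xy$ of $G'$ not in $E$ lies inside some $\N_i$. Such an edge $xy$ is a non-edge of $G$, i.e. an edge of $\overline{G}$, so it is covered by some $Q_i=\N_i$, meaning $x,y\in\N_i$. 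Hence $cow(G)\le k=\theta_\mathrm{e}(\overline{G})$.

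Conversely, let $\N_1,\dots,\N_k$ be a complete witness for $G$ with $k=cow(G)$, witnessed by a complete embedding $G'=(V,E\cup E')$; since $G'$ is complete, $E'$ is exactly the set of non-edges of $G$, i.e. the edge set of $\overline{G}$. Put $Q_i:=\N_i$; each $\N_i$ is independent in $G$, hence a clique in $\overline{G}$. I claim $Q_1,\dots,Q_k$ covers every edge of $\overline{G}$: an edge $xy$ of $\overline{G}$ is a non-edge of $G$, hence lies in $E'$, so by the witness property $x,y\in\N_i$ for some $i$, i.e. $xy$ is covered by $Q_i$. Thus $\theta_\mathrm{e}(\overline{G})\le k=cow(G)$, and combining the two inequalities gives $cow(G)=\theta_\mathrm{e}(\overline{G})$.

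I do not expect a genuine obstacle here; the statement is essentially a translation of definitions, and the only thing to be careful about is the direction of the embedding argument — namely that one may always take the complete embedding to add \emph{all} non-edges of $G$ (a witness for the minimum $k$ still works with this maximal $E'$, since enlarging $E'$ only makes the "inside some $\N_i$" condition easier to state, and every added edge is still a non-edge of $G$ covered by the same $\N_i$ structure). The cleanest phrasing is to fix $E'$ to be the full set of non-edges from the outset in both directions, which makes the clique-cover/witness correspondence literally identity on the underlying families of vertex subsets.
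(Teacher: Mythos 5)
Your proof is correct: the paper states this proposition without proof, citing it as an observation from the reference, and your argument is exactly the intended translation of definitions (independent sets of $G$ are cliques of $\overline{G}$, and the non-edges of $G$ that a complete embedding must add are precisely the edges of $\overline{G}$, so complete witnesses and edge clique covers coincide as families of vertex sets). Your remark that one may as well fix $E'$ to be all non-edges of $G$ is the right way to handle the only small subtlety.
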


Proposition~\ref{prop:cow-ecc} and the known results about \textsc{edge clique cover} imply:
\begin{theorem}\label{thm:ecc}
\begin{itemize}
\item[\em (1)] Computing the complete width is NP-hard, and remains NP-hard when restricted to graphs of minimum degree at least $n-7$, and to co-planar graphs.
\item[\em (2)] Computing the complete width of graphs of minimum degree at least $n-6$ and of co-chordal graphs can be done in polynomial time.
\end{itemize}
\end{theorem}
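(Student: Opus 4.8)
The plan is to obtain both parts directly from Proposition~\ref{prop:cow-ecc} and the cited results on \textsc{edge clique cover}, by passing to complements. The underlying point is that $G\mapsto\overline G$ is a polynomial-time computable involution that preserves the vertex set, so $(G,k)\mapsto(\overline G,k)$ is a polynomial-time many-one reduction in both directions between \textsc{complete width} and \textsc{edge clique cover}; it then only remains to record how the degree conditions and the classes ``co-planar'' and ``co-chordal'' behave under complementation.

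For part~(1), recall that $\deg_{\overline G}(v)=(n-1)-\deg_G(v)$ for every vertex $v$ of an $n$-vertex graph $G$; hence $\delta(G)\ge n-7$ is equivalent to $\Delta(\overline G)\le 6$, and $\overline G$ is planar exactly when $G$ is co-planar. First I would take an instance $(H,k)$ of \textsc{edge clique cover} with $H$ of maximum degree at most $6$ (NP-complete by~\cite{Hoover}) and map it to $(\overline H,k)$; then $\overline H$ has minimum degree at least $n-7$ and, by Proposition~\ref{prop:cow-ecc}, $cow(\overline H)=\theta_{\mathrm e}(H)$, so the complete width of $\overline H$ is at most $k$ iff $\theta_{\mathrm e}(H)\le k$. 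This shows \textsc{complete width} is NP-hard on graphs of minimum degree at least $n-7$. Replacing ``maximum degree at most $6$'' by ``planar'' (NP-complete by~\cite{CM}) and repeating the same step gives NP-hardness on co-planar graphs, and in particular NP-hardness in general.

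For part~(2), I would argue in the opposite direction: given $G$ with $\delta(G)\ge n-6$ (so $\Delta(\overline G)\le 5$) or given a co-chordal graph $G$ (so $\overline G$ is chordal), compute $\overline G$ in polynomial time, apply the known polynomial-time algorithm for \textsc{edge clique cover} on $\overline G$ --- \cite{Hoover} in the bounded-degree case, \cite{MWW,Ray} in the chordal case --- and return the value obtained, which equals $cow(G)$ by Proposition~\ref{prop:cow-ecc}. I do not expect a genuine obstacle here: the only thing that needs care is the elementary degree arithmetic and the observation that none of these reductions alters the number of vertices, so the thresholds $n-7$ and $n-6$ line up exactly with the bounds $\Delta\le 6$ and $\Delta\le 5$ from the literature.
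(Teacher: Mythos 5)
Your proposal is correct and matches the paper's approach: the paper derives this theorem directly from Proposition~\ref{prop:cow-ecc} together with the cited complexity results for \textsc{edge clique cover} (maximum degree at most six and planar graphs for hardness; maximum degree at most five and chordal graphs for tractability), exactly as you do via complementation and the degree identity $\deg_{\overline G}(v)=(n-1)-\deg_G(v)$. Your write-up simply makes explicit the routine details the paper leaves implicit.
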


In \cite{CHL}, it is conjectured that \textsc{edge clique cover}, and thus \textsc{complete width}, is NP-complete for $P_4$-free graphs (also called cographs).

\medskip
We close this section by the following basic facts about complete width, which will be useful later.

\begin{proposition}\label{prop:basic}
Let $G$ be a graph.
\begin{itemize}
\item[\em (1)] If $v$ is a universal vertex in $G$, then $cow(G)=cow(G-v)$.
\item[\em (2)] Let $G$ have no universal vertices. Suppose $u$ and $v$ are two vertices in $G$ with $N(u)=N(v)$. Then\\
$cow(G)=\begin{cases}
cow(G-u)+1, & \text{if $v$ is universal in $G-u$}\\
cow(G-u),   & \text{otherwise}
\end{cases}
$
\end{itemize}
\end{proposition}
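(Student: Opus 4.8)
The plan is to argue directly from the definition, using the reformulation that $cow(G)$ is the least number $k$ of independent sets $\N_1,\dots,\N_k$ of $G$ that \emph{cover every non-edge} of $G$, i.e. for each pair of non-adjacent vertices $x,y$ some $\N_i$ contains both $x$ and $y$; equivalently, by Proposition~\ref{prop:cow-ecc}, one may run the same argument on edge clique covers of $\overline G$. Two elementary remarks are used repeatedly: \emph{(a)} if $H$ is an induced subgraph of $G$ then $cow(H)\le cow(G)$, because for any complete witness $\N_1,\dots,\N_k$ of $G$ the sets $\N_i\cap V(H)$ are independent in $H$ and cover every non-edge of $H$; and \emph{(b)} a universal vertex is incident with no non-edge, hence may be deleted from or inserted into the sets of a complete witness at will. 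Part~(1) is now immediate: if $v$ is universal in $G$ then $G$ and $G-v$ have the same non-edges, so a family of sets is a complete witness of $G-v$ iff it is one of $G$, giving $cow(G)=cow(G-v)$ (in edge-clique-cover language, $v$ is isolated in $\overline G$ and $\theta_\mathrm{e}(\overline G)=\theta_\mathrm{e}(\overline{G-v})$).

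For Part~(2) put $H:=G-u$, and note that $N(u)=N(v)$ forces $u\not\sim v$ while $G$ has no universal vertex by hypothesis. \emph{Case 1: $v$ is universal in $H$.} Then $v\not\sim u$ and $N(u)=N(v)$ force both $u$ and $v$ to be adjacent to every other vertex of $G$, so $uv$ is the unique non-edge of $G$ incident with $u$ and also the unique one incident with $v$; thus the non-edges of $G$ are exactly those of $H$ together with $uv$. Adjoining the independent set $\{u,v\}$ to an optimal complete witness of $H$ gives a complete witness of $G$, so $cow(G)\le cow(H)+1$. For the reverse inequality, let $\N_1,\dots,\N_k$ be an optimal complete witness of $G$ and let $\N_1$ contain $u$ and $v$ (some set must, to cover $uv$); no non-edge $xy$ of $H$ can be covered by $\N_1$, since then $x,y\in V(H)\setminus\{v\}$ would both be adjacent to $v\in\N_1$, contradicting that $\N_1$ is independent. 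Hence $\N_2,\dots,\N_k$ already cover all non-edges of $H$, so $cow(H)\le k-1$, that is, $cow(G)=cow(H)+1$.

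\emph{Case 2: $v$ is not universal in $H$.} Choose a non-neighbor $w$ of $v$ with $w\in V(H)\setminus\{v\}$; by $N(u)=N(v)$, $w$ is also a non-neighbor of $u$. One direction is remark~\emph{(a)}: $cow(G)\ge cow(H)$. For the other, take an optimal complete witness $A_1,\dots,A_m$ of $H$ and put $\N_i:=A_i\cup\{u\}$ if $v\in A_i$, and $\N_i:=A_i$ otherwise. Each $\N_i$ is independent in $G$: when $v\in A_i$, the vertices of $A_i\setminus\{v\}$ are non-neighbors of $v$ lying in $V(H)$, hence non-neighbors of $u$ since $N(u)=N(v)$, and $u\not\sim v$. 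These sets cover every non-edge of $G$: non-edges of $H$ are covered because $A_i\subseteq\N_i$; $vw$ is covered by some $A_i\ni v$, whence $u\in\N_i$ and $uv$ is covered; and for every non-neighbor $x\in V(H)$ of $u$ (equivalently of $v$) the non-edge $vx$ of $H$ is covered by some $A_i\ni v,x$, whence $u,x\in\N_i$ and $ux$ is covered. Since every non-edge of $G$ incident with $u$ is $uv$ or such an $ux$, all non-edges of $G$ are covered, so $cow(G)\le m=cow(H)$, and $cow(G)=cow(H)$.

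The main things to get right are the lower bound in Case~1 — pinning down that the set covering $uv$ in an optimal witness of $G$ is useless for covering non-edges of $H$, which relies on $v$ being adjacent to all of $V(H)\setminus\{v\}$ — and, in Case~2, checking that enlarging by $u$ every set that contains $v$ both preserves independence and picks up exactly the new non-edges on $u$; here the hypothesis $N(u)=N(v)$ and the existence of a common non-neighbor $w$ (which guarantees that $v$ actually occurs in some set of every complete witness of $H$) are exactly what is needed. The remaining verifications are routine.
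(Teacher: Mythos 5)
Your proposal is correct and follows essentially the same route as the paper: part (1) is the same "universal vertices are incident with no non-edge" observation, and in part (2) your Case 2 construction (adding $u$ to every witness set containing $v$ and using a common non-neighbor to cover $uv$) is exactly the paper's argument, while your Case 1 is a direct two-inequality version of the paper's count via the forced set $\{u,v\}$. No gaps.
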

\begin{proof} (1): This is obvious.

\noindent
(2): Assume first that $v$ is universal in $G-u$. Then, clearly, the independent set $\{u,v\}$ belongs to any complete witness for $G$. Since $G$ has no universal vertices, $G-u-v$ is not a complete graph, {\em i.e.}, $cow(G-u-v)\geq 1$. Hence $cow(G)-1=cow(G-u-v)=cow(G-u)$, where the second equality follows from (1).

Suppose now that $v$ is not universal in $G-u$, and let $\N_1, \ldots, \N_k$ be a complete witness for $G-u$ with $k=cow(G-u)$. Set $\N_i'=\N_i$ if $v\not\in \N_i$ and $\N_i'=\N_i\cup\{u\}$ if $v\in \N_i$. Clearly, $\N_1',\ldots,\N_k'$ are independent sets in $G$. Furthermore, $\N_1',\ldots,\N_k'$ form a complete witness for $G$: Consider two non-adjacent vertices $x\not= y$ in $G$. If $u\not\in \{x,y\}$, then $x$ and $y$ belong to some $\N_i$, hence to some $\N_i'$. So, let $u=x$, say. If $v\not= y$, then $v$ and $y$ are non-adjacent in $G-u$ (as $N(u)=N(v)$), hence $v$ and $y$ belong to some $\N_i$. Hence $u=x$ and $y$ belong to $\N_i'=\N_i\cup\{u\}$. If $v=y$, then, as $v$ is not universal in $G-u$, $v$ is non-adjacent to some $z\in G-u-v$. Hence $v=y$ and $z$ belong to some $\N_i$, and so, $u=x$ and $y$ belong to $\N_i'=\N_i\cup\{u\}$.

Thus, $\N_1',\ldots,\N_k'$ form a complete witness for $G$, as claimed. Therefore, $cow(G)$ $\le$ $cow(G-u)$, hence $cow(G)=cow(G-u)$.
\qed
\end{proof}

Thus, by Proposition~\ref{prop:basic}, we often assume that, when discussing complete width, all graphs in question have no universal vertices and $N(u)\not= N(v)$ for any non-adjacent vertices $u, v$.

\section{Computing complete width is hard for $3K_2$-free bipartite graphs}
A bipartite graph $G=(V,E)$ is a graph whose vertex set $V$ can be partitioned into two sets $X$ and $Y$ such that for any edge $xy\in E$, $x\in X$ and $y\in Y$. Bipartite graphs without induced cycles of length at least six are called \emph{chordal bipartite\/}.
A \emph{biclique cover} of a graph $G$ is a family of complete bipartite subgraphs of $G$ whose edges cover the edges of $G$. The \emph{biclique cover number}, also called the \emph{bipartite dimension}, of $G$ is the minimum number of bicliques needed to cover all edges of $G$.

Given a graph $G$ and a positive integer $k$, the \textsc{biclique cover} problem of $G$ asks whether the edges of $G$ can be covered by at most $k$ bicliques.
The following theorem is well known.

\begin{theorem}[\cite{Muller,Orlin}]
\textsc{biclique cover} is NP-complete on bipartite graphs, and remains NP-complete on chordal bipartite graphs.
\end{theorem}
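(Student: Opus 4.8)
This theorem is quoted from the literature, so I only outline the line of argument, in the order the steps would be carried out. Membership in NP is immediate: a family of at most $k$ bicliques is a certificate of size polynomial in $|V(G)|$, and in polynomial time one checks that each listed pair of vertex sets spans a complete bipartite subgraph of $G$ and that the family meets every edge. So the work is in the hardness, and because chordal bipartite graphs form a subclass of bipartite graphs it suffices to produce hard instances that are themselves chordal bipartite.

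For the plain bipartite case the natural source problem is \textsc{set basis} (Stockmeyer: given finite sets $S_1,\dots,S_m$ and a positive integer $k$, is there a family of $k$ sets of which each $S_i$ is a union of a subfamily?), which is NP-complete. It fits because the biclique cover number of a bipartite graph equals the rectangle cover number of its $0/1$ biadjacency matrix --- the least number of all-$1$ combinatorial rectangles covering every $1$-entry, equivalently the Boolean rank of the matrix --- and a rectangle cover is essentially a set basis. Concretely, from $(S_1,\dots,S_m;k)$ I would build the bipartite graph $G$ with parts $X=\{s_1,\dots,s_m\}$ and $Y=S_1\cup\dots\cup S_m$, with an edge $s_iy$ exactly when $y\in S_i$. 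A complete bipartite subgraph of $G$ is a pair $(A,B)$ with $A\subseteq X$, $B\subseteq Y$ and $B\subseteq\bigcap_{s_i\in A}S_i$; reading $B$ as a prospective basis set and $A$ as the list of the $S_i$ that use it, both directions are short. A basis $B_1,\dots,B_k$ with $S_i=\bigcup_{j\in J_i}B_j$ yields the bicliques $\bigl(\{s_i:j\in J_i\},B_j\bigr)$, $j=1,\dots,k$, whose edges cover $E(G)$; conversely, in any biclique cover of $G$ the sets $B_1,\dots,B_k$ on the $Y$-side form a set basis, since the edges at a fixed $s_i$ that a given biclique $(A_j,B_j)$ covers are exactly $\{s_iy:y\in B_j\}$ with $B_j\subseteq S_i$, and ranging over all $j$ with $s_i\in A_j$ these exhaust $S_i$. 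Hence the biclique cover number of $G$ equals the minimum size of a set basis of $S_1,\dots,S_m$, so \textsc{biclique cover} is NP-complete on bipartite graphs.

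The strengthening to chordal bipartite graphs is the delicate step, which I expect to be the main obstacle. The $G$ above will in general contain induced $6$-cycles (witness $S_1,S_2,S_3$ and elements $y_1\in S_1\cap S_2$, $y_2\in S_2\cap S_3$, $y_3\in S_3\cap S_1$ with $y_1\notin S_3$, $y_2\notin S_1$, $y_3\notin S_2$), so it is not chordal bipartite and one must do more. Two routes present themselves: (a) prove \textsc{set basis} stays NP-complete when the set--element incidence graph is required to be chordal bipartite --- equivalently, when the $0/1$ matrix avoids the forbidden submatrix pattern characterising chordal bipartiteness; or (b) give a bespoke reduction, e.g.\ from \textsc{edge clique cover} (NP-complete by \cite{Holyer,KSW,Orlin}) or from \textsc{3-sat}, whose gadgets are assembled only from $C_4$'s and complete bipartite pieces glued along bounded-size interfaces, so that no induced cycle of length at least $6$ can arise. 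The tension is real: forcing cheap covers to respect the gadget structure typically calls for long alternating chains of vertices, and such chains are exactly what create long induced cycles. Reconciling the two is precisely the technical content of Müller's construction in \cite{Muller} (with the unrestricted bipartite case due to Orlin \cite{Orlin}), and a self-contained proof would have to reproduce that care.
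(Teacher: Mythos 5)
The paper gives no proof of this statement at all---it is quoted with citations to M\"uller and Orlin---so there is nothing internal to compare your argument against; judged on its own terms, your proposal proves only half of the theorem. The NP-membership argument and the reduction for the plain bipartite case are correct and are indeed Orlin's route: the biclique cover number of the set--element incidence graph equals the minimum size of a set basis (your two directions, $B_j\subseteq S_i$ whenever $s_i\in A_j$ and the union over such $j$ exhausting $S_i$, are exactly what is needed), and \textsc{set basis} is NP-complete. That part stands.

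The genuine gap is the second assertion, NP-completeness on chordal bipartite graphs, which you explicitly do not prove: you correctly observe that the incidence graph built from \textsc{set basis} can contain induced $C_6$'s, and then you only name two candidate strategies and defer ``the technical content'' to M\"uller's construction. Naming where a construction might come from is not a construction, and this is not a cosmetic omission in the present context: the chordal bipartite version is precisely what the paper's own Theorem~\ref{thm:cow} consumes, since the bipartite complement of a chordal bipartite graph is $(3K_2,C_8)$-free (using $BC(C_6)=3K_2$ and $BC(C_8)=C_8$), and this is how the $(3K_2,C_8)$-free hardness for \textsc{complete width} is obtained. Without an explicit reduction whose output instances are guaranteed to avoid induced cycles of length at least six---for example, a variant of the set-basis or clique-cover reduction with gadgets verified against the chordal bipartite forbidden-structure characterization---the strengthened statement remains an appeal to authority rather than a proof. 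So: first assertion proved, second asserted; to complete the attempt you would have to reproduce (or replace) M\"uller's construction and verify chordal bipartiteness of the constructed graphs.
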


For convenience, a bipartite graph $G=(V,E)$ with a bipartition $V=X\cup Y$ into independent sets $X$ and $Y$ is denoted as $G=(X+Y, E)$. Let $BC(G)=(X+Y,F)$, where $F=\{xy\mid x\in X,\,y\in Y, \mbox{ and } xy\not\in E\}$. We call $BC(G)$ the \emph{bipartite complement} of $G=(X+Y, E)$. Note that $BC(C_6)=3K_2$ and $BC(C_8)=C_8$. Hence if $G$ is chordal bipartite, then $BC(G)$ is $(3K_2, C_8)$-free bipartite.

In \cite{ChaHunKloPen}, the authors showed that the complete width problem is NP-complete on general graphs. We now establish our main theorem for sharpening that result of \cite{ChaHunKloPen}.

\begin{theorem}\label{thm:cow}
\textsc{complete width} is NP-complete on bipartite graphs, and remains NP-complete on $(3K_2, C_8)$-free bipartite graphs.
\end{theorem}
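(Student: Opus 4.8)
The plan is to reduce from \textsc{biclique cover} on bipartite graphs, using the bipartite-complement construction already previewed in the paragraph before the theorem. Given an instance $(H,k)$ of \textsc{biclique cover}, where $H=(X+Y,E)$ is bipartite, I would form the bipartite graph $G=BC(H)=(X+Y,F)$ with $F=\{xy\mid x\in X,y\in Y,xy\notin E\}$, and claim that the complete width of $G$ is controlled by the biclique cover number of $H$. The key identity to establish is that $cow(G)$ equals the biclique cover number of $H$ up to an additive constant of at most $2$: indeed, by Proposition~\ref{prop:cow-ecc} we have $cow(G)=\theta_\mathrm{e}(\overline{G})$, and $\overline{G}$ is the co-bipartite graph obtained by adding all edges inside $X$ and inside $Y$ to $H$. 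An edge clique cover of $\overline{G}$ splits naturally into cliques that lie entirely inside $X$ (one suffices, namely $X$ itself), cliques entirely inside $Y$ (one suffices), and cliques that meet both sides. A clique of $\overline{G}$ meeting both $X$ and $Y$ is exactly a biclique of $H$ (plus the internal edges, which are free). Hence every edge of $\overline{G}$ between $X$ and $Y$, i.e.\ every edge of $H$, must be covered by such a mixed clique, so $\theta_\mathrm{e}(\overline{G})\le (\text{biclique cover number of }H)+2$ and $\theta_\mathrm{e}(\overline{G})\ge$ (biclique cover number of $H$). To get an exact equivalence of decision instances rather than an additive slack, I would pad: add two disjoint new edges $a_1b_1,a_2b_2$ with $a_i\in X$-side and $b_i\in Y$-side to a suitably modified graph so that the two ``pure-side'' cliques become forced and paid for, turning the question $cow(G')\le k+2$ into precisely ``$H$ has a biclique cover of size $\le k$.'' The standard trick is to attach, for each side, a pendant-style gadget forcing one dedicated clique; one must check the gadget does not create shortcuts.

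Next I would verify the graph-class restriction. Since \textsc{biclique cover} remains NP-complete on chordal bipartite graphs (the cited Müller/Orlin theorem), I take $H$ chordal bipartite. As noted in the excerpt, $BC(C_6)=3K_2$ and $BC(C_8)=C_8$, and more generally the bipartite complement of a chordal bipartite graph contains no induced $3K_2$ and no induced $C_8$; I would prove this by showing that an induced $3K_2$ in $BC(H)$ corresponds to an induced $C_6$ in $H$ and an induced $C_8$ in $BC(H)$ corresponds to an induced $C_8$ in $H$, both excluded in a chordal bipartite graph (chordal bipartite graphs have no induced cycle of length $\ge 6$). The padding gadgets must be chosen so that they too introduce no induced $3K_2$ or $C_8$ — e.g.\ attaching the forcing edges so that their endpoints are adjacent to everything on the opposite side keeps the new graph $(3K_2,C_8)$-free; this compatibility check is a routine but necessary case analysis on which induced subgraphs can use gadget vertices.

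Membership in NP is immediate: a complete witness $\N_1,\dots,\N_k$ is a polynomial-size certificate, and verifying that filling in edges inside the $\N_i$ yields a complete graph is polynomial.

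The main obstacle I anticipate is engineering the padding gadget so that the additive constant is eliminated \emph{exactly} while simultaneously preserving $(3K_2,C_8)$-freeness and the bipartite structure; getting an additive-constant reduction is easy, but an exact one that survives the forbidden-subgraph constraints requires care, and it is conceivable one must instead argue directly that $cow(BC(H))$ already equals the biclique cover number of $H$ without any slack — by observing that in an optimal edge clique cover of $\overline{G}$ one may always assume the sets $X$ and $Y$ themselves are used only when they genuinely cover a cross edge, collapsing the ``$+2$'' — in which case no gadget is needed at all and the whole reduction becomes $G=BC(H)$, $k$ unchanged. I would pursue that cleaner route first and fall back on gadgets only if the collapse argument fails.
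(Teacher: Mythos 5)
Your overall strategy---reduce from \textsc{biclique cover} on chordal bipartite graphs via the bipartite complement, with an additive $+2$ absorbed by forcing two ``pure-side'' sets---is the same as the paper's, but as written the proof has a genuine gap, and the route you say you would try first is wrong. A complete witness for $BC(H)$ must also handle every pair of vertices inside $X$ and inside $Y$ (all such pairs are non-adjacent in $BC(H)$), and a pair $u,v\in X$ with no common neighbour in $H$ lies in no biclique of $H$; hence the difference between $cow(BC(H))$ and the biclique cover number of $H$ can genuinely be $0$, $1$ or $2$ depending on $H$, and deciding which is exactly the difficulty you cannot ``collapse'' away. So the gadget-free reduction $G=BC(H)$ with $k$ unchanged is not an equivalence, and the gadget is unavoidable. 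Your first gadget suggestion (two disjoint pendant-style edges) would not force anything and would itself create induced $2K_2$'s and $3K_2$'s; your parenthetical fix---make the gadget endpoints adjacent to everything on the opposite side---is essentially the paper's construction, which adds one vertex $x$ adjacent to all of $Y\cup\{y\}$ and one vertex $y$ adjacent to all of $X\cup\{x\}$ and sets $k'=k+2$.

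What is missing is precisely the heart of the converse direction, which you defer as ``one must check the gadget does not create shortcuts.'' The needed argument is a normalization: in any witness $\N_1,\dots,\N_{k+2}$ for $G'$, for each $u\in X$ the pair $\{u,x\}$ must lie in a common set, and since $x$ is adjacent to all of $Y\cup\{y\}$, every set containing $x$ is a subset of $X\cup\{x\}$; one may therefore replace such a set by $X\cup\{x\}$, remove $x$ from all other sets, do the same for $y$ and $Y\cup\{y\}$, and conclude that the remaining $k$ sets avoid $x$ and $y$, are independent in $BC(H)$, and must cover every edge of $H$ (each edge of $H$ is a crossing non-edge of $G'$ which cannot lie in a set containing $x$ or $y$), i.e.\ they form a biclique cover of $H$ of size $k$. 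Without this step you only have an additive-slack relation, which does not decide the problem. The remaining ingredients of your proposal---NP membership, NP-completeness of \textsc{biclique cover} on chordal bipartite graphs, and the observation that $BC(H)$ plus opposite-side-universal vertices is $(3K_2,C_8)$-free---are correct and match the paper.
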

\begin{proof}
We prove this theorem by reducing \textsc{biclique cover} to \textsc{complete width}.

Let $(G, k)$ be an input instance of the biclique cover problem, where $G=(X+Y,E)$ is a bipartite graph. We construct an input instance $(G', k')$ of the complete width problem as follows.

\begin{itemize}
\item $G'$ is the bipartite graph obtained from the bipartite complement $BC(G)=(X+Y,F)$ of $G$ by adding two new vertices $x$ and $y$ and adding all edges between $x$ and vertices in $Y\cup\{y\}$ and between $y$ and vertices in $X\cup\{x\}$.

More formally, $G'=(X'+Y',F')$ with $X'=X\cup \{x\}, Y'=Y\cup \{y\}$, and $F'=F\cup \{xu\mid u\in Y\cup \{y\}\}\cup \{yv\mid v\in X\cup \{x\}\}$.

\item Set $k':=k+2$.
\end{itemize}

We claim that the biclique cover number of $G$ is at most $k$ if and only if the complete width of $G'$ is at most $k'=k+2$.

First, let $\{B_i \mid 1\le i\le k\}$ be a biclique cover of $G$, where $B_i=(X_i+Y_i,E_i)$ with $X_i\subseteq X, Y_i\subseteq Y$. Then, as each $B_i$ is a biclique in $G$, each $\N_i=X_i\cup Y_i$ is an independent set in $G'$. Set $\N_{k+1}:= X'$ and $\N_{k+2}:= Y'$. Then it is easy to check that the $k'=k+2$ independent sets $\N_i$, $1\le i\le k+2$, form a complete witness for $G$. That is, $cow(G')\le k'$.

Conversely, let $\{\N_i \mid 1\le i\le k+2\}$ be a complete witness for $G'$. Then we may assume that
$$ x, y\not\in \N_i, 1\le i\le k.
$$
(To see this, consider a vertex $u\in X$. As $\{\N_i \mid 1\le i\le k+2\}$ is a complete witness for $G'$, $u$ and $x$ must belong to $\N_t$ for some $t\in\{1,\ldots, k+2\}$.
Therefore, $\N_t\subseteq X\cup\{x\}=X'$ because $x$ is adjacent to all vertices in $Y'$.
Clearly, we can replace $\N_t$ by $X'$ and, if $x\in \N_i$ for some $i\not= t$, replace $\N_i$ by $\N_i\setminus\{x\}$ to obtain a new witness such that $\N_t=X'$ and $x$ is contained only in $\N_t$.
Similarly, there is some $s$ such that $\N_s=Y'$ and $y$ is contained only in $\N_s$.
By re-numbering if necessary, we may assume that $t=k+1$ and $s=k+2$.)

Thus, by construction of $G'$, $\N_1, \ldots, \N_k$ are independent sets in $BC(G)$ and form a complete witness for $BC(G)$. Therefore, $B_i=G[\N_i]$, $1\le i\le k$, are bicliques in $G$ forming a biclique cover of $G$. That is, $\theta_\mathrm{e}(G)\le k$.

Note that if $G$ is chordal bipartite, then the bipartite graph $G'$ cannot contain $3K_2$ and $C_8$ as induced subgraphs. \qed
\end{proof}

Theorem~\ref{thm:cow} and Proposition~\ref{prop:cow-ecc} imply the following new NP-completeness result for \textsc{edge clique cover}.
\begin{corollary}\label{cor:ecc}
\textsc{edge clique cover} is NP-complete on $(\overline{3K_2},\overline{C_8})$-free co-bipar\-ti\-te graphs.
\end{corollary}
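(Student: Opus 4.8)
The plan is to derive Corollary~\ref{cor:ecc} directly by combining Theorem~\ref{thm:cow} with Proposition~\ref{prop:cow-ecc}, taking complements throughout. The key observation is that complementation is an involution that turns the bipartite graph $G'$ produced in the proof of Theorem~\ref{thm:cow} into a co-bipartite graph, and turns the forbidden induced subgraphs $3K_2$ and $C_8$ into their complements $\overline{3K_2}$ and $\overline{C_8}$. So the strategy is: start from an instance $(G',k')$ of \textsc{complete width} where $G'$ is $(3K_2,C_8)$-free bipartite, pass to $\overline{G'}$, and observe that $\overline{G'}$ is $(\overline{3K_2},\overline{C_8})$-free co-bipartite.

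First I would recall that a graph $H$ is co-bipartite precisely when $\overline{H}$ is bipartite, so the complement map is a bijection between bipartite graphs and co-bipartite graphs. Next I would note the general fact that $G'$ is $H$-free if and only if $\overline{G'}$ is $\overline{H}$-free, since $\overline{G'}[U] = \overline{G'[U]}$ for every vertex subset $U$; applying this with $H=3K_2$ and $H=C_8$ shows $\overline{G'}$ is $(\overline{3K_2},\overline{C_8})$-free whenever $G'$ is $(3K_2,C_8)$-free bipartite. Then, invoking Proposition~\ref{prop:cow-ecc} in the form $\theta_\mathrm{e}(\overline{G'}) = cow(\overline{\overline{G'}}) = cow(G')$, the decision "$cow(G')\le k'$?" is the same as "$\theta_\mathrm{e}(\overline{G'})\le k'$?". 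Hence the polynomial-time map $G'\mapsto \overline{G'}$ is a reduction from the NP-complete problem of Theorem~\ref{thm:cow} (restricted to $(3K_2,C_8)$-free bipartite graphs) to \textsc{edge clique cover} restricted to $(\overline{3K_2},\overline{C_8})$-free co-bipartite graphs, and membership of the latter in NP is routine (guess $k'$ cliques). This gives NP-completeness.

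There is essentially no main obstacle here; the only point requiring a line of care is making explicit that complementation commutes with taking induced subgraphs, so that the forbidden-subgraph characterization transfers cleanly — everything else is a direct substitution into Proposition~\ref{prop:cow-ecc}. One could also remark, as the paper does in passing for $C_6$ and $C_8$, that $\overline{3K_2} = K_{2,2,2}$ and $\overline{C_8}$ is the complement of an $8$-cycle, if a more concrete description of the excluded graphs is desired, but this is not needed for the proof.
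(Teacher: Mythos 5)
Your argument is correct and is exactly the reduction the paper intends: complement the hard instances from Theorem~\ref{thm:cow}, note that complementation turns $(3K_2,C_8)$-free bipartite graphs into $(\overline{3K_2},\overline{C_8})$-free co-bipartite graphs and preserves induced-subgraph exclusions, and apply Proposition~\ref{prop:cow-ecc} to identify $cow(G')$ with $\theta_\mathrm{e}(\overline{G'})$. The paper states the corollary as an immediate consequence of those two results, so your write-up just makes the same routine details explicit.
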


\section{Polynomially solvable cases}
In this section we establish some cases in which \textsc{complete width} can be solved in polynomial time. Actually, in each of these cases we will show that the complete width of the graphs under consideration can be computed in polynomial time.

\subsection{$2K_2$-free bipartite graphs}
Bipartite graphs without induced $2K_2$ are known in literature under the name {\em chain graphs} (\cite{Yannakakis}) or {\em difference graphs} (\cite{HamPelSun}). They can be characterized as follows.
\begin{proposition}[see \cite{MahPel}]\label{prop-chain}
A bipartite graph $G=(X+Y, E)$ is a chain graph if and only if for all vertices $u, v\in X$, $N(u)\subseteq N(v)$ or $N(v)\subseteq N(u)$.
\end{proposition}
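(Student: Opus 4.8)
The plan is to prove both directions of the characterization of chain graphs by the nested-neighborhood property.

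\medskip\noindent
\textbf{Proof proposal.} First I would prove the contrapositive of the "only if" direction: suppose the neighborhoods of $X$ are \emph{not} linearly ordered by inclusion, so there exist $u,v\in X$ with $N(u)\not\subseteq N(v)$ and $N(v)\not\subseteq N(u)$. Then I can pick $y\in N(u)\setminus N(v)$ and $y'\in N(v)\setminus N(u)$; by construction $y,y'\in Y$, $uy\in E$, $vy'\in E$, while $uy'\notin E$ and $vy\notin E$. Since $u,v\in X$ and $X$ is independent, $uv\notin E$, and likewise $yy'\notin E$. Hence $\{u,y,v,y'\}$ induces exactly two disjoint edges $uy$ and $vy'$, i.e.\ an induced $2K_2$, so $G$ is not a chain graph.

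\medskip\noindent
For the "if" direction I would again argue by contraposition: assume $G$ contains an induced $2K_2$, say on vertices $a,b,c,d$ with edges $ab$ and $cd$ and no other edges among them. Since $G$ is bipartite with parts $X,Y$, each edge has one endpoint in each part; without loss of generality $a\in X$, $b\in Y$, and then from $c,d$ being the endpoints of the other edge we may take $c\in X$, $d\in Y$ (if instead $c\in Y$, relabel within that edge). Now $b\in N(a)$ but $b\notin N(c)$ (no edge $bc$), so $N(a)\not\subseteq N(c)$; symmetrically $d\in N(c)$ but $d\notin N(a)$, so $N(c)\not\subseteq N(a)$. Thus the inclusion property fails for the pair $a,c\in X$. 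By symmetry, one should also note the property can be stated for $Y$ in place of $X$, but since the statement in Proposition~\ref{prop-chain} is phrased only for $X$, the two contrapositive arguments above suffice.

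\medskip\noindent
I do not expect any genuine obstacle here; the only point requiring a little care is the bookkeeping in the "if" direction, namely checking that one can always assign the four vertices of the induced $2K_2$ to the parts $X$ and $Y$ so that the two chosen vertices playing the role of $u,v$ land in $X$ — this follows immediately from bipartiteness since each of the two independent edges contributes exactly one vertex to each side. Everything else is a direct unfolding of the definitions, so the write-up is short. (This is a known result; the reference \cite{MahPel} is cited precisely so that a full proof may be omitted, but the sketch above is the standard argument.)
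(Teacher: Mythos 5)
Your proof is correct: both contrapositive arguments are complete, and the only delicate points (that $y\neq y'$ in the first direction, and that the two edges of the induced $2K_2$ each meet $X$ and $Y$ in exactly one vertex in the second) are handled properly. The paper itself gives no proof of Proposition~\ref{prop-chain} --- it is cited from \cite{MahPel} --- so there is nothing to compare against; your argument is the standard one and would serve as a self-contained justification.
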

\begin{theorem}\label{thm:chain}
The complete width of a chain graph can be computed in polynomial time.
\end{theorem}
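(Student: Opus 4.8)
\emph{Proof proposal.} The plan is to move to the complement and reuse the known polynomial-time algorithm for \textsc{edge clique cover} on chordal graphs. By Proposition~\ref{prop:cow-ecc}, $cow(G)=\theta_\mathrm{e}(\overline{G})$, so it suffices to show that the complement of every chain graph is chordal. Granting this, $\theta_\mathrm{e}(\overline{G})$ — and hence $cow(G)$ — is computable in polynomial time by the algorithm of~\cite{MWW,Ray}; that algorithm also produces a minimum edge clique cover of $\overline{G}$, which translates back into a minimum complete witness for $G$.

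To prove that $\overline{G}$ is chordal (equivalently: every $C_4$-free co-bipartite graph is chordal), suppose for contradiction that $\overline{G}$ has an induced cycle $C_\ell$ with $\ell\ge 4$. Since induced subgraphs are preserved under complementation, $G$ then contains an induced $\overline{C_\ell}$. But $\overline{C_4}=2K_2$ is forbidden because $G$ is a chain graph; $\overline{C_5}=C_5$ is an odd cycle, impossible in the bipartite graph $G$; and for every $\ell\ge 6$ the vertices $1,3,5$ of $C_\ell$ are pairwise non-adjacent, hence induce a triangle in $\overline{C_\ell}$, so $\overline{C_\ell}$ cannot occur in the (triangle-free) bipartite graph $G$ either. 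This contradiction shows $\overline{G}$ has no induced cycle of length $\ge 4$, i.e., $\overline{G}$ is chordal, completing the proof.

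A self-contained alternative, which also yields an explicit formula that is convenient for the later analysis of graphs of small complete width, is the following. After the reductions of Proposition~\ref{prop:basic} (delete universal vertices, identify false twins on each side, recording the additive corrections) one may, by Proposition~\ref{prop-chain}, order $X$ and $Y$ by strictly nested neighborhoods. An independent set of $G$ is then exactly a biclique of the bipartite complement $BC(G)$ — again a chain graph — with one side possibly empty, and a complete witness is precisely a family of such bicliques covering, besides the edges of $BC(G)$, all pairs inside $X$ and all pairs inside $Y$. The maximal bicliques of the chain graph $BC(G)$ are linearly nested; the $r$ of them corresponding to the steps of the staircase cover all edges of $BC(G)$ and, automatically, all pairs inside $X$ (resp.\ inside $Y$) unless some vertex of $X$ is $G$-adjacent to all of $Y$ (resp.\ some vertex of $Y$ to all of $X$), which costs at most one further (degenerate) biclique each. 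Combining $r$, these corrections, and the corrections from the reduction step gives a closed-form value for $cow(G)$ computable in linear time. For the first route there is essentially no obstacle beyond the elementary chordality check; for the second, the only delicate point is that deleting a false twin on one side can create a new universal vertex there, so the reductions in Proposition~\ref{prop:basic} must be iterated and their contributions combined carefully before the staircase description of an optimal witness becomes valid.
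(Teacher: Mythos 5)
Your first argument is correct and reaches the theorem by a genuinely different route than the paper. You pass to the complement via Proposition~\ref{prop:cow-ecc} and show that the complement of a chain graph is chordal (equivalently, every $C_4$-free co-bipartite graph is chordal); your case analysis on induced cycles ($\overline{C_4}=2K_2$, $\overline{C_5}=C_5$, and a triangle inside $\overline{C_\ell}$ for $\ell\ge 6$) is sound, so the chordal-graph \textsc{edge clique cover} algorithms cited in Section~\ref{sec:ecc} finish the job --- exactly the shortcut the paper itself invokes for split graphs through Theorem~\ref{thm:ecc}(2), but does not use for chain graphs. The paper instead argues directly: after the reductions of Proposition~\ref{prop:basic} it orders $X$ by strictly nested neighborhoods and proves, via an explicit witness and induction on $|X|$, the closed formula $cow(G)=|X|$ if $N(v_1)=\emptyset$ and $|X|+1$ otherwise. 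Your route is shorter, needs no new lower-bound argument, and shows in passing that the paper's corollary on $C_4$-free co-bipartite graphs already follows from the known chordal results; the paper's route is self-contained and yields the exact value together with an optimal witness, not merely polynomial-time computability. Your ``self-contained alternative'' is essentially the paper's construction transplanted to $BC(G)$, and its staircase count does match the paper's formula, but as written it only exhibits a witness of that size: the matching lower bound (that no smaller family of independent sets suffices) is nowhere argued, so that sketch alone would not constitute a proof --- it is your first route that carries the proposal.
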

\begin{proof}
Let $G=(X+Y, E)$ be a $2K_2$-free bipartite graph with at least three vertices. By Proposition~\ref{prop:basic}, we may assume that for any pair of vertices $u, v$ of $G$, $N(u)\not=N(v)$. Thus, $|X|\ge 2, |Y|\ge 2$, and $G$ has at most one non-trivial connected component and at most one trivial component which is then the unique isolated vertex of $G$. Let us also assume that the isolated vertex (if any) of $G$ belongs to $X$. By Proposition~\ref{prop-chain}, the vertices of $X$ can be numbered $v_1, v_2, \ldots, v_{|X|}$ such that $N(v_1)\subset N(v_2)\subset \cdots \subset N(v_{|X|})=Y$. Thus, $G$ is disconnected if and only if $v_1$ is the isolated vertex of $G$ if and only if $N(v_1)=\emptyset$. Clearly, such a numbering can be computed in polynomial time.

Write $\N_i =\{v_1,\ldots, v_i\}\cup (Y\setminus N(v_i))$, $1\le i\le |X|$. Since $N(v_j)\subset N(v_i)$ for $j<i$, $\N_i$ is an independent set, and since $N(v_{|X|})=Y$, $\N_{|X|}=X$. In case $N(v_1)\not=\emptyset$, let $\N_{|X|+1}=Y$. Note that in the case that $N(v_1)=\emptyset$, \emph{i.e.}, $v_1$ is the isolated vertex of $G$, $\N_1=Y\cup\{v_1\}$.

We claim that
$$
cow(G)=\begin{cases}
      |X|, & \text{ if } N(v_1)=\emptyset\\
      |X|+1, & \text{ otherwise}
\end{cases}
$$

Moreover, $\N_1,\ldots, \N_{|X|}$ and $\N_{|X|+1}$ (if $N(v_1)\not=\emptyset$) together form a complete witness for $G$.

\noindent\medskip
\emph{Proof of the Claim:\/} First, to see that the collection of the independent sets $\N_1$, \ldots, $\N_{|X|}$ and $\N_{|X|+1}$ (if $N(v_1)\not=\emptyset$) is a complete witness for $G$, let $u, v$ be two non-adjacent vertices of $G$. If $u, v \in X$, say $u=v_i$ and $v=v_j$ for some $1\le i <j\le |X|$, then $u, v\in \N_j$.
If $u\in X$ and $v\in Y$, say $u=v_i$ for some $1\le i\le |X|$, then $u,v\in \N_i$.
So let $u, v\in Y$. In this case, let $i\le j$ be the smallest integers such that $u\in N(v_i), v\in N(v_j)$. If $i>1$ then $u, v\not\in N(v_1)$, hence $u,v\in\N_1$. Thus, let $u\in N(v_1)$. Then, in particular $N(v_1)\not=\emptyset$ and hence $u,v\in \N_{|X|+1}=Y$.

In particular, $cow(G)$ is at most the right hand side stated in the claim.

Next, observe that the claim is clearly true in case $|X|=2$. So, let $|X|\ge 3$. Note that in $G-v_1$, $N(v_2)$ is not empty, hence by induction, $cow(G-v_1)=|X\setminus\{v_1\}|+1 = |X|$ and $\N_1'= \N_2\setminus\{v_1\},\ldots, \N_{|X|-1}'=\N_{|X|}\setminus\{v_1\}$ and $\N_{|X|}'=\N_{|X|+1}=Y$ form a complete witness for $G-v_1$. Now, if $N(v_1)=\emptyset$ then $cow(G)\ge cow(G-v_1)=|X|$, hence $cow(G)=|X|$. So, let $N(v_1)\not=\emptyset$. In this case, for any $u\in N(v_2)\setminus N(v_1)$ and any maximal independent set $I$ of $G$ containing $v_1$ and $u$, $\N_i'\not\subseteq I$. Thus, $cow(G)\ge cow(G-v_1)+1=|X|+1$, hence $cow(G)=|X|+1$.

The proof of the claim is completed, hence Theorem~\ref{thm:chain}.
\qed
\end{proof}

Theorem~\ref{thm:chain} and Proposition~\ref{prop:cow-ecc} imply the following corollary.
\begin{corollary}
The edge clique cover number of a $C_4$-free co-bipartite graph can be computed in polynomial time.
\end{corollary}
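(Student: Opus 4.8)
The plan is to derive this immediately from the duality in Proposition~\ref{prop:cow-ecc} together with the structural translation already set up in the preceding discussion. First I would recall that $cow(G)=\theta_\mathrm{e}(\overline{G})$, so to compute $\theta_\mathrm{e}(H)$ for a $C_4$-free co-bipartite graph $H$ it suffices to compute $cow(\overline{H})$ and observe that $\overline{H}$ is a bipartite graph. The key observation is that $\overline{C_4}=2K_2$, so $H$ being $C_4$-free is equivalent to $\overline{H}$ being $2K_2$-free; combined with $\overline{H}$ being bipartite (since $H$ is co-bipartite), $\overline{H}$ is exactly a chain graph. Then Theorem~\ref{thm:chain} says $cow(\overline{H})$ can be computed in polynomial time, and since complementation is a polynomial-time operation, so can $\theta_\mathrm{e}(H)=cow(\overline{H})$.

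Concretely, the steps in order are: (i) let $H$ be co-bipartite and $C_4$-free, and set $G=\overline{H}$; (ii) check that $G$ is bipartite (co-bipartite means $V$ partitions into two cliques, whose complements are independent sets) and $2K_2$-free (the induced-subgraph complement of a $C_4$ is a $2K_2$, so an induced $2K_2$ in $G$ would give an induced $C_4$ in $H$), hence $G$ is a chain graph by Proposition~\ref{prop-chain}; (iii) invoke Theorem~\ref{thm:chain} to compute $cow(G)$ in polynomial time; (iv) apply Proposition~\ref{prop:cow-ecc} to conclude $\theta_\mathrm{e}(H)=cow(\overline{H})=cow(G)$.

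I do not expect any genuine obstacle here — this is a routine corollary. The only point requiring a line of care is the equivalence "$H$ is $C_4$-free $\iff$ $\overline{H}$ is $2K_2$-free," which follows because taking complements commutes with taking induced subgraphs and $\overline{C_4}=2K_2$. Everything else is bookkeeping: co-bipartite complements to bipartite, and the polynomial-time bound is inherited directly from Theorem~\ref{thm:chain} plus the trivial cost of forming a complement. So the proof is essentially: "$G:=\overline{H}$ is a chain graph, apply Theorem~\ref{thm:chain} and Proposition~\ref{prop:cow-ecc}."
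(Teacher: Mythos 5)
Your proof is correct and is exactly the paper's argument: the paper derives this corollary directly from Theorem~\ref{thm:chain} and Proposition~\ref{prop:cow-ecc}, using the same observation that the complement of a $C_4$-free co-bipartite graph is a $2K_2$-free bipartite (chain) graph. Nothing is missing.
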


\subsection{$(2K_2, K_3)$-free graphs}
We extend Theorem~\ref{thm:chain} on $K_2$-free bipartite graphs by showing that \textsc{complete width} is polynomially solvable for large class of $2K_2$-free triangle-free graphs.

\begin{theorem}\label{thm:2K2K3-free}
The complete width of a $(2K_2,K_3)$-free graph can be computed in polynomial time.
\end{theorem}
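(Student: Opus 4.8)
The plan is to reduce the problem to a bounded-size structural analysis by exploiting the extremely rigid structure of $(2K_2,K_3)$-free graphs. Recall (Chv\'atal--Hammer, or the classical result on $2K_2$-free triangle-free graphs) that a connected $(2K_2,K_3)$-free graph is either a star $K_{1,t}$, a cycle $C_4$ or $C_5$, or a ``half-graph''-type bipartite graph; more precisely, after removing universal vertices and twins via Proposition~\ref{prop:basic}, every connected $(2K_2,K_3)$-free graph is bipartite and is in fact a chain graph, or is $C_5$. So the first step is to normalize $G$ using Proposition~\ref{prop:basic}: delete universal vertices and identify vertices with equal neighborhoods, keeping track of how $cow$ changes (it is invariant under deleting universal vertices, and increases by exactly one for each twin that becomes universal after the deletions). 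This is polynomial-time and reduces to the case with no universal vertices and no false/true twins.

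Next I would handle the connected components. A $(2K_2,K_3)$-free graph can have at most one nontrivial component (two disjoint edges in different components already give a $2K_2$), plus possibly isolated vertices; but after twin-removal there is at most one isolated vertex. For the nontrivial component $H$, I would argue it is either a chain graph or $C_5$ (triangle-freeness rules out everything with an odd cycle of length $\ge 7$ by $2K_2$-freeness, and $C_5$ is the only surviving odd case; all remaining connected triangle-free $2K_2$-free graphs are bipartite, and a connected bipartite $2K_2$-free graph is a chain graph by Proposition~\ref{prop-chain}). For the chain-graph case, Theorem~\ref{thm:chain} already gives the exact value of $cow(H)$ and an explicit witness in polynomial time. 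For $C_5$ one computes $cow(C_5)=\theta_{\mathrm e}(\overline{C_5})=\theta_{\mathrm e}(C_5)=3$ directly. Finally I would combine: adding back a single isolated vertex and the earlier twin/universal corrections gives $cow(G)$ by a short case analysis (an isolated vertex is a co-universal vertex in the complement, so by Proposition~\ref{prop:cow-ecc} and the edge-clique-cover side one can also see its effect cleanly, but handling it directly on the complete-width side via a witness argument as in Theorem~\ref{thm:chain} is cleaner here).

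The main obstacle is establishing the structural dichotomy cleanly enough that the proof is self-contained: I need a crisp argument that, after normalization, the only connected $(2K_2,K_3)$-free graphs are chain graphs and $C_5$. The key lemma is that a connected triangle-free $2K_2$-free graph has diameter at most $3$ and, if it contains an induced $C_5$, it \emph{is} $C_5$ (any vertex outside would, being nonadjacent to two nonadjacent vertices of the $C_5$, create either a triangle or a $2K_2$); and if it is $C_5$-free, triangle-free and connected, then it is bipartite, whence Proposition~\ref{prop-chain} applies. Once that dichotomy is in hand, everything else is bookkeeping on top of Theorem~\ref{thm:chain}, and the polynomial-time bound is immediate since all steps---twin/universal reduction, component decomposition, recognizing chain graphs, and running the chain-graph algorithm---are polynomial.
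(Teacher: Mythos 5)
Your overall route mirrors the paper's (split into the $C_5$-free case, which is $2K_2$-free bipartite and handled by Theorem~\ref{thm:chain}, and the case with an induced $C_5$, where after normalization the $C_5$ is an entire component), but two concrete steps fail as written. First, the justification of your key lemma is not correct: it is \emph{not} true that any vertex outside an induced $C_5$ ``being nonadjacent to two nonadjacent vertices of the $C_5$, create[s] either a triangle or a $2K_2$.'' A vertex adjacent to exactly two non-consecutive vertices of the $C_5$ (say $v_1$ and $v_3$) produces neither: indeed, any blow-up of $C_5$ by independent sets is $(2K_2,K_3)$-free, so the statement ``a connected triangle-free $2K_2$-free graph containing an induced $C_5$ is $C_5$'' is false without the twin-freeness assumption. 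This case must be dispatched exactly as the paper does: such a $v$ would satisfy $N(v)=N(v_2)$ unless some vertex $u$ distinguishes them, and the normalization $N(u)\not=N(v)$ for non-adjacent vertices (Proposition~\ref{prop:basic}) then forces a distinguishing vertex whose presence yields a $K_3$ or a $2K_2$. You set up the twin normalization but never use it at the one point where it is indispensable.

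Second, your value for the $C_5$ case is wrong: you claim $cow(C_5)=\theta_{\mathrm e}(\overline{C_5})=\theta_{\mathrm e}(C_5)=3$, but since $\overline{C_5}\cong C_5$ is triangle-free, every clique covers at most one of its five edges, so $\theta_{\mathrm e}(C_5)=5$ and hence $cow(C_5)=5$ (equivalently, every independent set of $C_5$ has size at most two and there are five non-edges to cover). The paper's proof concludes $cow(G)=5$ in this case. As written, your algorithm would return an incorrect complete width whenever the input contains an induced $C_5$; with these two points repaired, the argument coincides with the paper's proof.
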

\begin{proof}
Let $G$ be a $(2K_2, K_3)$-free graph. If $G$ has no induced $C_5$, then $G$ is $2K_2$-free bipartite, hence we are done by Theorem~\ref{thm:chain}.

So let $G$ contain an induced $C_5$, say $C=v_1v_2v_3v_4v_5v_1$. By Proposition~\ref{prop:basic}, we may assume that $N(u)\not=N(v)$ for any non-adjacent vertices $u$ and $v$ of $G$. We will see that $C$ is a connected component of $G$. Let $H$ be the connected component of $G$ containing $C$. If $H\not=C$, then there is some vertex $v\in H-C$ adjacent to some vertex in $C$, say $v_1$. Since $G$ is $(2K_2,K_3)$-free, $v$ is non-adjacent to $v_2,v_5$ and adjacent to $v_3$ or $v_4$ but not both. Let $v$ be adjacent to $v_3$, say. Now, as $N(v)\not=N(v_2)$, there is a vertex $u$ adjacent to $v$ and non-adjacent to $v_2$, say. But then $G[C+u+v]$ has a $K_3$ or a $2K_2$, a contradiction. Thus $H=C$ and as $G$ is $2K_2$-free, $C$ is the only non-trivial connected component of $G$, hence $cow(G)=5$.
\qed
\end{proof}

\subsection{Split graphs}

A \emph{split graph} is one whose vertex set can be partitioned into a clique $Q$ and an independent set $S$. For convenience, a split graph is denoted as $G=(Q+S,E)$. It is well known that split graphs can be characterized as follows.

\begin{proposition}[\cite{FolHam}]\label{prop:split}
The following statements are equivalent for any graph $G$.
\begin{itemize}
\item[\em (i)] $G$ a split graph;
\item[\em (ii)] $G$ is a $(2K_2, C_4, C_5)$-free graph;
\item[\em (iii)] $G$ is a $2K_2$-free chordal graph;
\item[\em (iv)] $G$ and $\overline{G}$ are chordal.
\end{itemize}
\end{proposition}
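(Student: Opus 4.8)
The plan is to prove the four statements equivalent via the chain (i)$\Rightarrow$(ii)$\Rightarrow$(iii)$\Rightarrow$(i) together with (i)$\Leftrightarrow$(iv), so that the only real work lies in the implication (iii)$\Rightarrow$(i).

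The implications (i)$\Rightarrow$(ii), (ii)$\Leftrightarrow$(iii) and (i)$\Leftrightarrow$(iv) are all routine. For (i)$\Rightarrow$(ii): fix a split partition $G=(Q+S,E)$; since $2K_2$, $C_4$ and $C_5$ all have clique number $2$, a copy $H$ of $C_5$ would meet $Q$ in at least $5-\alpha(C_5)=3$ vertices, which is impossible, and a copy $H$ of $2K_2$ or $C_4$ meets $Q$ in exactly two vertices, which then form an edge of $H$ and therefore force the remaining two vertices to lie in $S$ while being joined by an edge of $H$ — again impossible. For (ii)$\Rightarrow$(iii): recall that chordal means containing no induced $C_k$ with $k\ge 4$; in any induced cycle $v_1v_2\cdots v_kv_1$ with $k\ge 6$ the four vertices $v_1,v_2,v_4,v_5$ induce a $2K_2$, so a $2K_2$-free graph has no long induced cycle, and together with $C_4$- and $C_5$-freeness this yields chordality (the reverse implication is immediate). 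For (i)$\Leftrightarrow$(iv): by (i)$\Rightarrow$(iii) a split graph is chordal, and the complement of a split graph $(Q+S,E)$ is again split (interchange $Q$ and $S$), hence also chordal, so (i)$\Rightarrow$(iv); conversely $\overline{G}$ chordal forces $\overline{G}$ to be $\overline{C_4}=2K_2$-free while $G$ chordal forces $G$ to be $C_4$- and $C_5$-free, so (iv)$\Rightarrow$(ii).

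For the main implication (iii)$\Rightarrow$(i), let $G$ be $2K_2$-free and chordal, and argue by induction on $|V(G)|$. If $G$ is complete we are done; otherwise pick a simplicial vertex $v$, so that $N_G(v)$ is a clique. By induction $G-v$ has a split partition, and we choose one, $(Q+S)$, with $|Q|$ as large as possible. Since $N_G(v)$ is a clique it meets the independent set $S$ in at most one vertex. If $N_G(v)\cap S=\emptyset$ then $N_G(v)\subseteq Q$, so $v$ has no neighbour in $S$ and $\big(Q+(S\cup\{v\})\big)$ is a split partition of $G$. Otherwise $N_G(v)\cap S=\{s\}$; let $M$ be the set of vertices of $Q$ not adjacent to $s$. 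Because $N_G(v)\cup\{v\}$ is a clique containing $s$, no vertex of $N_G(v)$ lies in $M$, so every vertex of $M$ misses both $s$ and $v$; hence any two vertices of $M$ together with the edge $vs$ would induce a $2K_2$, forcing $|M|\le 1$. If $M=\emptyset$ then $\big((Q\cup\{s\})+(S\setminus\{s\})\big)$ would be a split partition of $G-v$ with a larger clique, contradicting the choice of $(Q+S)$; so $M=\{q^*\}$, $s$ is adjacent to all of $Q\setminus\{q^*\}$, and $q^*$ misses both $s$ and $v$.

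The decisive observation is now that $q^*$ has no neighbour in $S\setminus\{s\}$: if $s''\in S\setminus\{s\}$ and $q^*\sim s''$, then $\{v,s,q^*,s''\}$ induces a $2K_2$ with edges $vs$ and $q^*s''$, since $v\not\sim q^*$ and $v\not\sim s''$ (as $N_G(v)=\{s\}\cup(N_G(v)\cap Q)$), $s\not\sim q^*$ (as $q^*\in M$), and $s\not\sim s''$ (both in $S$) — a contradiction. Consequently $(S\setminus\{s\})\cup\{q^*\}$ is independent. Taking $C:=(Q\setminus\{q^*\})\cup\{s,v\}$ and $I:=(S\setminus\{s\})\cup\{q^*\}$ when $v$ is adjacent to all of $Q\setminus\{q^*\}$, and $C:=(Q\setminus\{q^*\})\cup\{s\}$ and $I:=(S\setminus\{s\})\cup\{q^*,v\}$ otherwise, one checks in both cases that $C$ is a clique (using that $s$ is complete to $Q\setminus\{q^*\}$ and that $N_G(v)$ is a clique), that $I$ is independent (using the displayed observation together with $v\not\sim q^*$ and the fact that $v$ has no neighbour in $S\setminus\{s\}$), and that $C\cup I=V(G)$; hence $G$ is split. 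The part of the argument I expect to require the most care is precisely this case $N_G(v)\cap S\ne\emptyset$ — setting up the set $M$, extracting the unique vertex $q^*$, and especially the small $2K_2$ argument that kills neighbours of $q^*$ in $S\setminus\{s\}$, since it is exactly that step which lets the reshuffled partition of $G$ stay a split partition; everything else is bookkeeping.
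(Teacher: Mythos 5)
Your proof is correct: all the easy implications check out, and the inductive argument for (iii)$\Rightarrow$(i) is sound — the maximal-$|Q|$ choice correctly forces $M=\{q^*\}$, the $2K_2$ with edge $vs$ correctly kills both a second vertex of $M$ and any neighbour of $q^*$ in $S\setminus\{s\}$, and the reassembled partitions are valid. Note, however, that the paper does not prove this proposition at all; it is quoted from Foldes and Hammer, so there is no in-text argument to compare against. Your route also differs from the classical Foldes--Hammer proof, which establishes (ii)$\Rightarrow$(i) directly: one takes a maximum clique $K$ of a $(2K_2,C_4,C_5)$-free graph and shows, by swapping at most one vertex between $K$ and its complement, that the remaining vertices form an independent set — no chordality or induction is invoked. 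Your version instead exploits chordality through simplicial-vertex elimination and induction on $|V(G)|$, which is a perfectly legitimate alternative: it buys a very local, mechanical verification at the cost of needing the (iii) formulation as the entry point and a maximality trick on the inductive split partition. One small simplification: your final case distinction is unnecessary, since the partition $C=(Q\setminus\{q^*\})\cup\{s\}$, $I=(S\setminus\{s\})\cup\{q^*,v\}$ is a split partition of $G$ whether or not $v$ is complete to $Q\setminus\{q^*\}$ (you never use that adjacency to keep $I$ independent), so the first subcase can be dropped.
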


In particular, split graphs are complements of chordal graphs. Hence, by Theorem~\ref{thm:ecc} (2), computing the complete width of split graphs can be done in polynomial time. Below, however, we give a simple and direct way for doing this. Moreover, our solution will be useful for computing the complete width of pseudo split graphs. The class of pseudo split graphs are not necessarily co-chordal and properly contains all split graphs.

In the following, by Proposition~\ref{prop:basic}, we may consider the split graphs $G=(Q+S,E)$ with no universal vertex.

\begin{theorem}\label{split-size}
For a split graph $G=(Q+S,E)$ with no universal vertex, the complete width of $G$ is either $|Q|$ or $|Q|+1$.
\end{theorem}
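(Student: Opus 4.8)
The plan is to exhibit, for a split graph $G=(Q+S,E)$ with no universal vertex, a complete witness of size $|Q|+1$, and then show the complete width cannot drop below $|Q|$. For the upper bound, I would take the clique $Q$ itself as one independent set of $\overline{G}$ — wait, more carefully: since $cow(G)=\theta_\mathrm{e}(\overline G)$ by Proposition~\ref{prop:cow-ecc}, and in $\overline G$ the set $Q$ becomes an independent set while $S$ becomes a clique, it is cleaner to argue directly with complete witnesses. Each vertex $s\in S$ is non-adjacent (in $G$) only to other vertices of $S$ and to those $q\in Q$ not joined to $s$; so a natural choice of independent sets is $\N_q = \{q\}\cup\{s\in S : qs\notin E\}$ for each $q\in Q$, which is independent since $S$ is independent and $q$ is non-adjacent exactly to the $s$'s we put with it, together with $\N_0 = S$. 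These $|Q|+1$ sets cover every non-edge of $G$: a non-edge inside $S$ lies in $\N_0$, and a non-edge $qs$ with $q\in Q$, $s\in S$ lies in $\N_q$; there are no non-edges inside $Q$. Hence $cow(G)\le |Q|+1$.

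For the lower bound I would show $cow(G)\ge |Q|$. Equivalently, any complete witness must have at least $|Q|$ members. The key point is that $Q$ is a clique of $\overline{G}$'s complement — i.e., the vertices of $Q$ are pairwise adjacent in $G$, so no two of them can ever lie together in a common independent set $\N_i$ of a complete witness. More usefully, consider the non-edges we must cover: since $G$ has no universal vertex, each $q\in Q$ is non-adjacent to at least one vertex, necessarily in $S$ (as $Q$ is a clique), so for each $q$ there is some $s_q\in S$ with $qs_q\notin E$, and the pair $\{q,s_q\}$ must be contained in some independent set of the witness. Two distinct $q,q'\in Q$ cannot share the same independent set since $qq'\in E$. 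This already gives $|Q|$ distinct independent sets, so $cow(G)\ge |Q|$, completing the proof that $cow(G)\in\{|Q|,|Q|+1\}$.

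The main subtlety — and the step I would be most careful about — is the lower bound argument as just sketched is almost immediate, so the real content of the theorem is likely meant to set up the subsequent (not-yet-stated) characterization of exactly when $cow(G)=|Q|$ versus $|Q|+1$; for the bracketing statement itself, the only thing to watch is the degenerate cases: that "no universal vertex" forces every $q\in Q$ to have a private non-neighbor in $S$ (which needs $S\ne\emptyset$, indeed $|S|\ge 1$, and one should note that if $Q=\emptyset$ then $G$ is a non-empty independent set with $cow(G)=1=|Q|+1$, consistent with the claim), and that we may WLOG assume $Q$ is a maximum clique so that no vertex of $S$ is adjacent to all of $Q$ (otherwise move it into $Q$) — this is needed so that the witness $\{\N_q\}_{q\in Q}\cup\{\N_0\}$ is genuinely forced to behave and, more importantly, will matter for the finer analysis. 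I would also remark that the choice of split partition is not unique and fix one with $Q$ maximum at the outset, invoking Proposition~\ref{prop:split} for the characterization of split graphs, and invoke Proposition~\ref{prop:basic} to justify the no-universal-vertex assumption.
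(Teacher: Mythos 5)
Your proposal is correct and follows essentially the same route as the paper: the upper bound uses exactly the sets $V(G)\setminus N(q)$ for $q\in Q$ together with $S$, and the lower bound is the same observation that no two clique vertices can share a witness set while (by the absence of universal vertices) every $q\in Q$ must occur in some witness set. Your version merely makes explicit the use of the no-universal-vertex hypothesis in the lower bound and notes degenerate cases, which the paper leaves implicit; the maximality of $Q$ is indeed not needed for this statement.
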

\begin{proof}
Assume that the complete width of $G$ is $k$. That is, there is an embedding $G'$ of $G$ such that for every edge $xy$ in $G'$ but not in $G$ there are independent sets $\N_1,\dots,\N_k$ in $G$ such that $\{x,y\}\subseteq \N_i$ for some $i$. By the definition, $G[Q]$ is a clique. Thus it is impossible that there are two vertices of $Q$ in the same $\N_i$ for $1\le i\le k$. That is, each $\N_i$ contains at most one vertex in $Q$. Therefore, the complete width of $G$ is at least $|Q|$.

On the other hand, for each vertex $v\in Q$, let $\N_v=V(G)\setminus N(v)$. Then, each $\N_v$, $v\in Q$, is an independent set. Further, for each $\N_v$, we can fill edges $vu$, $u\in \N_v-v$. Finally, for the final set $S$, we make $G[S]$ a clique by filling edge $xy$ for any two vertices $x,y\in S$. The resulting graph is a complete graph. That is, the complete width of $G$ is at most $|Q|+1$. This completes the proof.\qed
\end{proof}

By Theorem \ref{split-size}, there are only two cases for determining the complete width of a split graph. For the split graph $G=(Q+S,E)$, let $\N_v=V(G)\setminus N(v)$ for $v\in Q$. We have the following lemma.

\begin{lemma}\label{split-cond}
For a split graph $G=(Q+S,E)$ with no universal vertex, if for any two vertices $x,y\in S$, there is an $\N_v$, $v\in Q$, such that $x,y\in \N_v$, then the complete width of $G$ is $|Q|;$ otherwise it is $|Q|+1$.
\end{lemma}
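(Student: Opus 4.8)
The plan is to prove both directions of Lemma~\ref{split-cond} by combining Theorem~\ref{split-size} with a careful analysis of which independent sets can appear in a complete witness. By Theorem~\ref{split-size}, $cow(G)$ is either $|Q|$ or $|Q|+1$, so it suffices to show that $cow(G)=|Q|$ \emph{if and only if} every pair $x,y\in S$ is covered by some $\N_v$ with $v\in Q$.

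For the ``if'' direction, suppose every pair in $S$ lies in a common $\N_v$. I would exhibit an explicit complete witness of size $|Q|$, namely the family $\{\N_v \mid v\in Q\}$ (recall $\N_v=V(G)\setminus N(v)$, which is an independent set since $Q\setminus\{v\}\subseteq N(v)$ and $S$ is independent; note $v\in\N_v$ as $v\notin N(v)$). I must check that every non-adjacent pair $\{u,w\}$ of $G$ is contained in some $\N_v$. Three cases: if $u,w\in S$, this is exactly the hypothesis; if $u\in Q$ and $w\in S$ with $uw\notin E$, then $w\in V(G)\setminus N(u)=\N_u$ and also $u\in\N_u$, so $\{u,w\}\subseteq\N_u$; the case $u,w\in Q$ cannot occur since $Q$ is a clique. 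Hence $\{\N_v\mid v\in Q\}$ is a complete witness of size $|Q|$, so $cow(G)\le|Q|$, and equality follows from Theorem~\ref{split-size}.

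For the ``only if'' direction (equivalently, the contrapositive of the ``otherwise'' claim), suppose there exist $x,y\in S$ not covered by any $\N_v$, $v\in Q$; I must show $cow(G)=|Q|+1$, i.e.\ rule out $cow(G)=|Q|$. Assume for contradiction that $\N_1,\dots,\N_{|Q|}$ is a complete witness of size $|Q|$. As argued in the proof of Theorem~\ref{split-size}, each $\N_i$ contains at most one vertex of $Q$, and since there are exactly $|Q|$ sets and $|Q|$ vertices in $Q$ (and each $q\in Q$ must be co-covered with every non-neighbor, in particular with some vertex — here I use that $G$ has no universal vertex, so each $q\in Q$ has a non-neighbor, forcing $q$ into some $\N_i$), a counting/pigeonhole argument shows each $\N_i$ contains \emph{exactly} one vertex $v_i$ of $Q$. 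Now consider the non-adjacent pair $\{x,y\}\subseteq S$: it must lie in some $\N_i$, which contains $v_i\in Q$. Since $\N_i$ is an independent set containing $v_i$, every vertex of $\N_i$ is a non-neighbor of $v_i$, so $\N_i\subseteq V(G)\setminus N(v_i)=\N_{v_i}$; in particular $x,y\in\N_{v_i}$, contradicting the choice of $x,y$. Therefore $cow(G)\ne|Q|$, and by Theorem~\ref{split-size}, $cow(G)=|Q|+1$.

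The main obstacle is the pigeonhole step in the second direction: I need to be sure that in a size-$|Q|$ witness every one of the $|Q|$ sets is forced to absorb a distinct vertex of $Q$, which requires knowing that each $q\in Q$ genuinely needs to be placed in some $\N_i$. This is where the ``no universal vertex'' hypothesis is essential — it guarantees each $q\in Q$ has a non-neighbor and hence must share an independent set with it; combined with the ``at most one $Q$-vertex per $\N_i$'' constraint from Theorem~\ref{split-size}, the counting forces a perfect matching between the $|Q|$ sets and the $|Q|$ vertices of $Q$. Once that is in place, the containment $\N_i\subseteq\N_{v_i}$ is immediate from independence, and the contradiction drops out.
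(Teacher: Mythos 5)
Your proof is correct and follows essentially the same route as the paper: the lower bound and the ``at most one $Q$-vertex per independent set'' observation come from Theorem~\ref{split-size}, and the family $\{\N_v \mid v\in Q\}$ is used as the size-$|Q|$ witness in the ``if'' direction. In fact, your pigeonhole argument in the ``otherwise'' direction (each of the $|Q|$ sets in a hypothetical size-$|Q|$ witness must contain exactly one vertex $v_i$ of $Q$, forcing $\N_i\subseteq\N_{v_i}$) makes rigorous a step the paper only asserts tersely, namely why \emph{no} witness of size $|Q|$ can exist rather than just the specific family $\N_1,\dots,\N_{|Q|}$.
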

\begin{proof}
Assume that for any two vertices $x,y\in S$, there is an $\N_v$, $v\in Q$ such that $x,y\in \N_v$. We show that the complete width of $G$ is $|Q|$. Without loss of generality, we assume all the $\N_v$'s are ordered as the sequence of $\N_1,\N_2,\dots,\N_{|Q|}$. For completing $G$ into $K_n$, for each $\N_v$, we fill the edges $vu$, $u\in (\N_v\cap S)$. Furthermore, assume that $\N_i$ is the last set that contains $x$ and $y$ for any two vertices $x,y\in S$. That is, $\{x,y\}\subseteq \N_i$ but $\{x,y\}\not\subseteq \N_j$ for each $j>i$. Then the edge $xy$ is filled in $\N_i$. By assumption, every edge in $\overline{G}[S]$ can be filled in some $\N_i$. Thus the complete width of $G$ is $|Q|$.

On the other hand, if no $\N_i$ contains $x$ and $y$ for some $x,y\in S$, then there is no way to fill $x,y$ in $\N_1,\N_2,\dots, \N_{|Q|}$. Therefore the complete width of $G$ is $|Q|+1$.\qed
\end{proof}

By Lemma \ref{split-cond}, for any two vertices $x,y\in S$, we can check whether there is a vertex $v\in Q$ such that both $xv$ and $yv$ are in $E$ or not. By using adjacency matrix of $G$, all the work can be done in $O(n^3)$ time. Thus, we have the following theorem.

\begin{theorem}\label{thm:cow-split}
The complete width of a split graph can be computed in polynomial time.
\end{theorem}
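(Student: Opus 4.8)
The plan is to combine the structural bound of Theorem~\ref{split-size} with the decision procedure of Lemma~\ref{split-cond}, turning the ``if'' condition of that lemma into an explicit polynomial-time test. By Proposition~\ref{prop:basic} we may first preprocess $G$ in linear time to remove universal vertices (repeatedly deleting them does not change $cow(G)$), so we may assume $G=(Q+S,E)$ has no universal vertex. A split partition $Q,S$ can be found in linear time by the standard algorithm based on the degree sequence (or by Proposition~\ref{prop:split}), so this step is not an obstacle.

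Once the split partition is in hand, Theorem~\ref{split-size} tells us the answer is either $|Q|$ or $|Q|+1$, and Lemma~\ref{split-cond} tells us precisely which: it is $|Q|$ exactly when every non-edge $xy$ of $G[S]$ (equivalently, every pair $x,y\in S$, since $S$ is independent) is contained in some $\N_v=V(G)\setminus N(v)$ with $v\in Q$, i.e.\ $x,y$ have a common non-neighbour in $Q$, equivalently $x$ and $y$ have a common neighbour in $Q$. So the whole problem reduces to: does every pair of vertices in $S$ have a common neighbour in $Q$? First I would form the adjacency matrix $A$ of $G$ and, for each pair $x,y\in S$, scan the (at most $n$) vertices $v\in Q$ checking whether $xv\in E$ and $yv\in E$; this is $O(n)$ work per pair and $O(n^2)$ pairs, for $O(n^3)$ total, as already noted in the excerpt after Lemma~\ref{split-cond}.

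I would then assemble these observations into the proof: preprocess to remove universal vertices and compute a split partition (polynomial time); by Theorem~\ref{split-size} the complete width is $|Q|$ or $|Q|+1$; by Lemma~\ref{split-cond} decide between the two by the $O(n^3)$ common-neighbour test; output $|Q|$ if the test succeeds for all pairs in $S$, and $|Q|+1$ otherwise. Since every step runs in polynomial time, the complete width of a split graph can be computed in polynomial time, which is exactly the statement of Theorem~\ref{thm:cow-split}.

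I do not expect any genuine obstacle here, since all the substantive work has been done in Theorem~\ref{split-size} and Lemma~\ref{split-cond}; the only thing to be careful about is the bookkeeping around universal-vertex removal, namely that after deleting universal vertices the graph is still split with the \emph{same} clique side shrunk appropriately, and that the value $|Q|$ referred to in Theorem~\ref{split-size} is that of the reduced graph. This is immediate because a universal vertex of a split graph can always be taken to lie in $Q$, and removing it leaves a split graph $(Q'+S,E')$ with $Q'=Q\setminus\{v\}$; so the bound and the test are applied to the reduced instance, and Proposition~\ref{prop:basic}(1) guarantees the complete width is unchanged.
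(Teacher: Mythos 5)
Your overall strategy is exactly the paper's: Theorem~\ref{split-size} pins the answer to $|Q|$ or $|Q|+1$, Lemma~\ref{split-cond} decides which, and the decision is made by an $O(n^3)$ adjacency-matrix scan over pairs of $S$ and vertices of $Q$, after removing universal vertices via Proposition~\ref{prop:basic}. However, there is a genuine error in the step where you turn the lemma's condition into a test. The condition $x,y\in\N_v=V(G)\setminus N(v)$ means that $v\in Q$ is adjacent to \emph{neither} $x$ nor $y$, i.e.\ $x$ and $y$ have a common \emph{non}-neighbour in $Q$. You assert that this is ``equivalently $x$ and $y$ have a common neighbour in $Q$'' and then run the common-neighbour test; these two conditions are not equivalent, and the algorithm as you describe it gives the wrong answer on concrete instances. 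Take the $3$-sun: $Q=\{a,b,c\}$ a triangle, $S=\{x,y,z\}$ independent, with $x\sim a,b$, $y\sim a,c$, $z\sim b,c$ (it has no universal vertex and no two non-adjacent vertices with equal neighbourhoods, so no preprocessing applies). Every pair of vertices of $S$ has a common neighbour in $Q$, so your test outputs $|Q|=3$; but no pair of $S$ has a common non-neighbour in $Q$ (the only independent sets containing $a$, $b$, $c$ are $\{a,z\}$, $\{b,y\}$, $\{c,x\}$, respectively, and none of them covers a pair from $S$), so by Lemma~\ref{split-cond} the complete width is $|Q|+1=4$.

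The repair is immediate and leaves the complexity unchanged: for each pair $x,y\in S$, check whether some $v\in Q$ satisfies $xv\notin E$ and $yv\notin E$; with this correction your argument coincides with the paper's proof. (Be aware that the paper's own informal sentence following Lemma~\ref{split-cond} contains the same slip, speaking of ``both $xv$ and $yv$ are in $E$''; the formal condition of the lemma, stated via $\N_v=V(G)\setminus N(v)$, is the one to implement.) Your bookkeeping about universal-vertex removal and computing a split partition is fine and matches the paper's standing assumptions.
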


\subsection{Pseudo-split graphs}
Graphs without induced $2K_2$ and $C_4$ are called {\em pseudo-split graphs}. By Proposition~\ref{prop:split}, the class of pseudo-split graphs properly contains the class of split graphs. Note that a pseudo-split graph may contain an induced $C_5$, hence it might not be co-chordal. Pseudo-split graphs can be characterized as follows.
\begin{theorem}[\cite{BHPT,MafPre}]\label{thm:pseudo-split}
A graph is pseudo-split if and only if its vertex set can be partitioned into
three sets $Q, S, C$ such that $Q$ is a clique, $S$ is an independent set, $C$ induces a $C_5$ or is empty, $xy$ is an edge for each $x\in Q$ and each $y\in C$, and there are no edges between $S$ and $C$.
\end{theorem}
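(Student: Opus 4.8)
The plan is to prove the two implications separately, splitting the ``only if'' direction according to whether $G$ contains an induced $C_5$.

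\emph{The ``if'' direction.} Suppose $V(G)=Q\cup S\cup C$ has the stated shape. If $C=\emptyset$, then $G$ is a split graph, hence $(2K_2,C_4)$-free by Proposition~\ref{prop:split}. If $C$ induces a $C_5$, I would argue that no four vertices can induce a $2K_2$ or a $C_4$: since $G[Q\cup S]$ is split, such a forbidden set must contain some $a\in C$, and as $a$ is complete to $Q$, anticomplete to $S$, and has exactly two neighbours in $C$ (which are adjacent to each other), a short inspection of where the other three vertices lie -- in $Q$, in $S$, among the two neighbours of $a$ in $C$, or among the two vertices of $C$ non-adjacent to $a$ -- eliminates every case. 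The amount of checking halves once one observes that the complement of such a $G$ again has a partition of this shape, with the roles of $Q$ and $S$ exchanged and the same $C_5$, so that excluding induced $2K_2$'s excludes induced $C_4$'s automatically.

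\emph{The ``only if'' direction.} Let $G$ be $(2K_2,C_4)$-free. If $G$ is also $C_5$-free, then $G$ is $(2K_2,C_4,C_5)$-free, hence a split graph by Proposition~\ref{prop:split}, and its split partition together with $C=\emptyset$ is as required. Otherwise fix an induced $C_5$, say $C=c_1c_2c_3c_4c_5c_1$. The crux is the claim that every vertex $v\notin C$ is either complete or anticomplete to $C$. To prove it, consider the traces $N(v)\cap C$ that are neither empty nor all of $C$; up to the symmetries of $C_5$ there are six of them (a single vertex; two adjacent; two non-adjacent; three inducing a $P_3$; three inducing a $K_2+K_1$; four vertices), and each one yields a forbidden induced subgraph on $v$ together with three suitable vertices of $C$ -- for instance the trace $\{c_1\}$ gives the $2K_2$ on $\{v,c_1,c_3,c_4\}$ and the trace $\{c_1,c_3\}$ gives the $C_4$ on $\{v,c_1,c_2,c_3\}$; complementation again pairs the size-$k$ trace with the size-$(5-k)$ one, so only three cases are genuinely different.

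Granting the claim, set $Q:=\{v\notin C:\ C\subseteq N(v)\}$ and $S:=\{v\notin C:\ N(v)\cap C=\emptyset\}$, so that $V(G)$ is the disjoint union of $Q$, $S$ and $C$. If two vertices $u,w\in Q$ were non-adjacent, then $\{u,w,c_1,c_3\}$ would induce a $C_4$, so $Q$ is a clique; if two vertices $x,y\in S$ were adjacent, then $\{x,y,c_1,c_2\}$ would induce a $2K_2$, so $S$ is independent; and $Q$ is complete to $C$ while $S$ is anticomplete to $C$ by construction. Hence $Q\cup S\cup C$ is the desired partition. The single real obstacle is the complete-or-anticomplete claim: it is easy to pick the ``wrong'' three vertices of $C$ and end up with a $P_4$ or a paw instead, so the underlying case enumeration is best isolated as a lemma rather than scattered through the proof.
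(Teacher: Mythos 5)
The paper does not prove this theorem at all: it is quoted with citations to Bl\'azsik--Hujter--Pluh\'ar--Tuza and Maffray--Preissmann, so there is no in-paper argument to compare yours against. Judged on its own, your proof is correct and self-contained. In the ``only if'' direction, the reduction of the $C_5$-free case to Proposition~\ref{prop:split}, the complete-or-anticomplete claim, and the final verification that $Q$ is a clique (via the $C_4$ on $\{u,w,c_1,c_3\}$) and $S$ is independent (via the $2K_2$ on $\{x,y,c_1,c_2\}$) are all sound; your two sample traces do produce the stated forbidden subgraphs, and the complementation trick is legitimate, since $\overline{G}$ is again $(2K_2,C_4)$-free, $C$ induces a $C_5$ in $\overline{G}$, and the trace of $v$ there is the complementary set, so the size-$1$ and the two size-$2$ cases really do cover sizes $3$ and $4$. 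The only piece left implicit is the size-$2$-adjacent base case, which you should state for completeness: if $N(v)\cap C=\{c_1,c_2\}$, then $\{v,c_1,c_3,c_4\}$ induces a $2K_2$. Likewise the ``if'' direction is only sketched, but the checking goes through: any induced $2K_2$ must meet $C$, an endpoint of one edge in $C$ forces the other edge's endpoints out of $Q$ (complete to $C$) and out of $S$ (independent and anticomplete to $C$), hence the whole configuration into $C$, which is impossible in a $C_5$; your complement symmetry (swapping $Q$ and $S$, keeping $C$) then disposes of $C_4$ without a separate case analysis. So the argument stands; it just deserves the one explicit extra subcase and a line of detail in the ``if'' direction before being presented as a full proof.
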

Note that it can be recognized in linear time if a graph is a pseudo split graph, and if so, a partition stated in Theorem~\ref{thm:pseudo-split} can be found in linear time~\cite{MafPre}.
\begin{theorem}\label{thm:cow-pseudo-split}
The complete width of a pseudo-split graph can be computed in polynomial time.
\end{theorem}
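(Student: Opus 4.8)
The plan is to handle the case without an induced $C_5$ by reduction to split graphs, and the case with an induced $C_5$ by an exact formula. First I would repeatedly delete universal vertices, which is legitimate by Proposition~\ref{prop:basic}(1) and preserves pseudo-splitness because the class is hereditary (it is $(2K_2,C_4)$-free); so from now on assume $G$ has no universal vertex. Next, by Theorem~\ref{thm:pseudo-split} I obtain in linear time a partition $V(G)=Q\cup S\cup C$ with $Q$ a clique, $S$ independent, $C$ inducing a $C_5$ or empty, $Q$ completely joined to $C$, and no edges between $S$ and $C$. If $C=\emptyset$ then $G$ is a split graph and Theorem~\ref{thm:cow-split} finishes the job, so the interesting case is $C=\{c_1,\dots,c_5\}$ where the $C_5$ has edges $c_ic_{i+1}$ with indices taken modulo $5$.

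In that case I would prove $cow(G)=|Q|+5$. For the upper bound, exhibit the $|Q|+5$ independent sets $\N_q=\{q\}\cup(S\setminus N(q))$ for $q\in Q$ and $\mathtt{M}_i=\{c_i,c_{i+2}\}\cup S$ for $i=1,\dots,5$, and verify directly that each is independent (the non-neighbours of $q\in Q$ are exactly $S\setminus N(q)$, since $q$ sees all of $Q\setminus\{q\}$ and all of $C$; and $c_i,c_{i+2}$ are non-adjacent in the $C_5$ while $S$ has no neighbour in $C$) and that together they cover every non-edge of $G$: a non-edge inside $S$ lies in $\mathtt{M}_1$; a non-edge between $Q$ and $S$ of the form $qs$ lies in $\N_q$; the five non-edges of the induced $C_5$ are exactly the pairs $\{c_i,c_{i+2}\}$ and lie in the $\mathtt{M}_i$; and a non-edge between $C$ and $S$ of the form $c_is$ lies in $\mathtt{M}_i$. (There are no non-edges inside $Q$ or between $Q$ and $C$.) Hence $cow(G)\le|Q|+5$.

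For the matching lower bound, take any complete witness $\N_1,\dots,\N_k$ with $k=cow(G)$. If $Q\neq\emptyset$, then $S\neq\emptyset$ (otherwise every vertex of $Q$ would be universal), and every $q\in Q$ has a non-neighbour $s_q\in S$; the non-edge $qs_q$ is covered by some witness set $\N_{i(q)}$, and since $\N_{i(q)}$ is independent and contains $q$, it meets $Q$ only in $q$ and is disjoint from $C$ (as $q$ is joined to all of $C$). Thus $\{\N_{i(q)} : q\in Q\}$ consists of $|Q|$ distinct witness sets, all disjoint from $C$. On the other hand, each of the five non-edges $\{c_i,c_{i+2}\}$ of the induced $C_5$ is covered by a witness set whose trace on $C$ is precisely that pair, because a $C_5$ has independence number $2$; hence no witness set covers two of these non-edges, giving five further distinct witness sets, each meeting $C$. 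The two families are disjoint, so $k\ge|Q|+5$, and therefore $cow(G)=|Q|+5$. Since the reductions and the partition are computable in polynomial time, so is $cow(G)$.

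The step I expect to be the main obstacle is the lower-bound bookkeeping: one must be careful that the $|Q|$ witness sets forced by the clique together with $S$, and the five witness sets forced by the $C_5$, are genuinely distinct and occupy disjoint ``slots'', which rests entirely on the observation that a vertex of $Q$ cannot lie in an independent set with a vertex of $C$. Two points should not be overlooked: the degenerate sub-case $Q=\emptyset$ (where the formula correctly reads $cow(G)=5=cow(C_5)$), and the fact that removing universal vertices at the start is essential --- for instance adding a universal vertex to $C_5$ does not change the complete width, yet would inflate the raw count $|Q|+5$ by one.
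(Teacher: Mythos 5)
Your proposal is correct and follows essentially the same route as the paper: reduce to the split case when $C=\emptyset$, and otherwise show $cow(G)=|Q|+5$ using exactly the witness sets $V\setminus N(q)$ ($q\in Q$) and $S\cup\{c_i,c_{i+2}\}$. The only difference is presentational: you spell out the lower-bound counting (the $|Q|$ sets forced by $Q$ avoid $C$, the five sets forced by the $C_5$ meet $C$), whereas the paper compresses this into an appeal to Theorem~\ref{split-size} together with $cow(C_5)=5$.
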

\begin{proof}
Let $G=(V,E)$ be a pseudo-split graph without universal vertices. Let $V=Q+S+C$ be a partition as in Theorem~\ref{thm:pseudo-split}. We may assume that $C\not=\emptyset$ otherwise we are done by Theorem~\ref{thm:cow-split}.

So let $C$ be the induced $C_5=v_1v_2v_3v_4v_5v_1$. Then, clearly, the $|Q|+5$ independent sets $V-N(v)$, $v\in Q$, and $S\cup\{v_i,v_{i+2}\}$ (indices are taken modulo $5$), $1\le i\le 5$, can be used for completing $G$. Thus, by Theorem~\ref{split-size}, and by noting that $cow(C_5)=5$, we have $cow(G)=|Q|+5$.\qed
\end{proof}

Theorem~\ref{thm:cow-pseudo-split} and Proposition~\ref{prop:cow-ecc} imply the following corollary (note that the complement of a pseudo-split is also a pseudo-split graph).
\begin{corollary}
The edge clique cover number of a pseudo-split graph can be computed in polynomial time.
\end{corollary}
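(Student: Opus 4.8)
The plan is to combine Theorem~\ref{thm:cow-pseudo-split} with Proposition~\ref{prop:cow-ecc} via complementation. First I would record that the class of pseudo-split graphs is closed under taking complements: since pseudo-split graphs are precisely the $(2K_2,C_4)$-free graphs and $\overline{2K_2}=C_4$, $\overline{C_4}=2K_2$, a graph $H$ contains an induced $2K_2$ (respectively $C_4$) if and only if $\overline{H}$ contains an induced $C_4$ (respectively $2K_2$); hence $H$ is pseudo-split if and only if $\overline{H}$ is. Alternatively, one can argue directly from Theorem~\ref{thm:pseudo-split}: if $V(H)=Q+S+C$ is the partition guaranteed there, then in $\overline{H}$ the roles of $Q$ and $S$ are interchanged, $C$ still induces a $C_5$ (as $\overline{C_5}=C_5$), every edge of $\overline{H}$ between the new clique and $C$ is present, and there are no edges between the new independent set and $C$; so the same partition with $Q$ and $S$ swapped witnesses that $\overline{H}$ is pseudo-split.

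Next, given an instance $H$ of \textsc{edge clique cover} with $H$ pseudo-split, I would compute $\overline{H}$ in polynomial time; by the previous paragraph $\overline{H}$ is again pseudo-split. By Proposition~\ref{prop:cow-ecc} we have $\theta_\mathrm{e}(H)=cow(\overline{H})$, and by Theorem~\ref{thm:cow-pseudo-split} the quantity $cow(\overline{H})$ can be computed in polynomial time (indeed, Theorem~\ref{thm:cow-pseudo-split} together with Proposition~\ref{prop:basic} reduces the computation to finding the partition of Theorem~\ref{thm:pseudo-split} and then reading off $|Q|$, $|Q|+1$, or $|Q|+5$). Composing these steps gives a polynomial-time algorithm computing $\theta_\mathrm{e}(H)$, which is exactly the claim.

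There is essentially no real obstacle in this argument; the only point needing verification is the self-complementarity of the pseudo-split class, and that is immediate from either the forbidden-induced-subgraph characterization or the structural description in Theorem~\ref{thm:pseudo-split}, as indicated above. I would simply state this observation and then invoke Proposition~\ref{prop:cow-ecc} and Theorem~\ref{thm:cow-pseudo-split}.
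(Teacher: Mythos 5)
Your proof is correct and follows exactly the paper's route: the paper derives this corollary from Theorem~\ref{thm:cow-pseudo-split} and Proposition~\ref{prop:cow-ecc}, noting (as you verify in more detail) that the complement of a pseudo-split graph is again pseudo-split. Your explicit check of self-complementarity, via either the $(2K_2,C_4)$-free characterization or the $Q+S+C$ partition, is just a fleshed-out version of the paper's parenthetical remark.
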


\section{Problem kernel}\label{section:kernel}

Parameterized complexity deals with NP-hard problems whose instances come
equipped with an additional integer parameter $k$. The objective is to design algorithms
whose running time is $f(k)\cdot \text{poly}(n)$ for some computable
function $f$ depending only on $k$ and some polynomial $\text{poly}(\cdot)$. Problems admitting such algorithms are called \emph{fixed-parameter tractable}. See, {\em e.g.},~\cite{CFKLMPPS} for more information.
It is well known that fixed-parameter tractable problems are exactly those problems having a kernel. Here, a {\em kernel} is an algorithm that, given an instance $(x,k)$ of the problem with a fixed parameter $k$, outputs in polynomial time in $|x|+k$ an `equivalent' instance $(x',k')$ of the same problem such that $|x'|, k'\le g(k)$ for some computable function $g$ depending only on $k$.

As mentioned, \textsc{complete width} and \textsc{edge clique cover} are NP-comp\-le\-te in general and fixed parameter tractable with respect to $k$. In \cite{ChaHunKloPen}, an fpt-algorithm for \textsc{complete width} was given, based on the monadic second order logic. In \cite{GGHN}, it was shown that \textsc{edge clique cover} admits a problem kernel of at most $g(k)=2^k$ vertices.

In this section, we give a characterization of $k$-probe complete graphs, which will imply that \textsc{complete width} admits a problem kernel of at most $2^k$ vertices. With Proposition~\ref{prop:cow-ecc}, this provides an alternative way to see that \textsc{edge clique cover} admits a problem kernel of at most $2^k$ vertices (\cite{GGHN}).

To this end, we first construct, for a given integer $k$, a prototype for graphs with complete width $k$.  Write $[k]=\{1,\ldots, k\}$ and let $\mathcal{P}[k]$ be the set of all subsets of $[k]$. We define the graph $G[k]$ as follows: $V(G[k])=\mathcal{P}[k]$, $E(G[k])=\{\{M,L\}\mid M\cap L=\emptyset\}$. Thus, the vertices of $G[k]$ are the subsets of $\{1,\ldots,k\}$ and two subsets are adjacent in $G[k]$ whenever they are disjoint. Let $G\star H$ be the {\em join} of $G$ and $H$ obtained from $G+H$ by adding all possible edges $xy$ between any vertex $x \in G$ and any vertex $y \in H$. Then, $G[1]$ is the clique $K_2$, $G[2]=(K_2+K_1)\star K_1$, $G[3]=(\text{Net}+K_1)\star K_1$, where {\em Net} is the graph consisting of six vertices $a, b, c, a', b'$ and $c'$ and six edges $ab, bc, ca, aa', bb'$ and $cc'$ (see Figure~\ref{fig:Gk}).

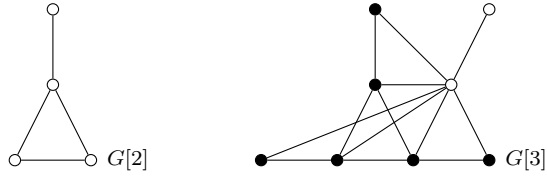
\begin{figure}[htb]
\begin{center}
\begin{tikzpicture}[scale=.5]
\tikzstyle{vertex}=[draw,circle,inner sep=1.5pt]
\node[vertex] (1) at (0,0) {};
\node[vertex] (2) at (2,0) {};
\node[vertex] (3) at (1,2) {};
\node[vertex] (4) at (1,4) {};

\draw (1) -- (2) -- (3) -- (4); \draw (1) -- (3);
\node[] (name) at (2,0) [label=right:{$G[2]$}] {};
\end{tikzpicture}
\qquad\qquad
\begin{tikzpicture}[scale=.5]
\tikzstyle{net}=[draw,circle,inner sep=1.5pt,fill=black];
\tikzstyle{vertex}=[draw,circle,inner sep=1.5pt]
\node[net] (1) at (0,0) {};
\node[net] (2) at (2,0) {};
\node[net] (3) at (4,0) {};
\node[net] (4) at (6,0) {};
\node[net] (5) at (3,2) {};
\node[net] (6) at (3,4) {};
\node[vertex] (7) at (5,2) {};
\node[vertex] (8) at (6,4) {};

\draw (1) -- (2) -- (3) -- (4);
\draw (2) -- (5) -- (6); \draw (3) -- (5);
\draw (1) -- (7) -- (2);
\draw (3) -- (7) -- (4);
\draw (5) -- (7) -- (6);
\draw (8) -- (7);

\node[] (name) at (6,0) [label=right:{$G[3]$}] {};
\end{tikzpicture}
\end{center}
\caption{The graph $G[2]=(K_2+K_1)\star K_1$ and $G[3]=(\text{Net}+K_1)\star K_1$. The black vertices in $G[3]$ induce the Net.}
\label{fig:Gk}
\end{figure}
\begin{proposition}\label{prop:Gk}
$cow(G[k])=k$.
\end{proposition}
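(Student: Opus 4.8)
The plan is to establish the two inequalities $cow(G[k])\le k$ and $cow(G[k])\ge k$ separately, working directly from the definition of a complete witness; alternatively one could pass to $\theta_\mathrm{e}(\overline{G[k]})$ via Proposition~\ref{prop:cow-ecc}, but the direct argument is equally short. For the upper bound I would exhibit an explicit complete witness of size $k$: for each $i\in[k]$ set $\N_i=\{M\subseteq[k]\mid i\in M\}$. Any two sets that both contain $i$ have nonempty intersection, hence are non-adjacent in $G[k]$, so each $\N_i$ is an independent set. Moreover, if $\{M,L\}$ is any non-edge of $G[k]$ then $M\cap L\ne\emptyset$ by the definition of adjacency in $G[k]$; picking any $i\in M\cap L$ gives $M\in\N_i$ and $L\in\N_i$. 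Hence $\N_1,\dots,\N_k$ form a complete witness, and $cow(G[k])\le k$. The universal vertex $\emptyset$ of $G[k]$ creates no difficulty here: it lies in no $\N_i$ and is incident to no non-edge, in accordance with Proposition~\ref{prop:basic}(1).

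For the lower bound I would run a short pigeonhole argument that exploits independence. Let $\N_1,\dots,\N_m$ be an arbitrary complete witness for $G[k]$. For each $i\in[k]$ the pair $e_i=\{\{i\},[k]\}$ is a non-edge of $G[k]$, because $\{i\}\cap[k]=\{i\}\ne\emptyset$. Hence there is an index $t(i)\in[m]$ with $\{i\},[k]\in\N_{t(i)}$. I then claim the map $i\mapsto t(i)$ is injective: if $t(i)=t(j)$ for some $i\ne j$, then $\{i\}$ and $\{j\}$ both lie in $\N_{t(i)}$, yet $\{i\}\cap\{j\}=\emptyset$ means $\{i\}$ and $\{j\}$ are adjacent in $G[k]$, contradicting the independence of $\N_{t(i)}$. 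Therefore $m\ge k$, that is, $cow(G[k])\ge k$. Combining this with the previous paragraph gives $cow(G[k])=k$.

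I do not anticipate a genuine obstacle; the whole argument is elementary. The one step that wants a moment's thought is the choice of ``certificate'' non-edges for the lower bound: one needs a family of $k$ non-edges of $G[k]$ no two of which can be covered by a single independent set, and the pairs $\{\{i\},[k]\}$ do exactly this, since they share the common vertex $[k]$ while their other endpoints $\{1\},\dots,\{k\}$ are pairwise adjacent in $G[k]$ --- they form a clique --- so no independent set can contain two of them. A minor bookkeeping point is to make sure that the universal vertex $\emptyset$ and the small values of $k$ are treated correctly, but neither affects the line of argument above.
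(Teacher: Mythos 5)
Your proof is correct and follows essentially the same route as the paper: the same witness $\N_i=\{M\subseteq[k]\mid i\in M\}$ for the upper bound, and the same lower-bound certificate, namely that the singletons $\{1\},\dots,\{k\}$ form a clique each of whose members is non-adjacent to $[k]$, so the $k$ non-edges $\{[k],\{i\}\}$ require $k$ distinct independent sets. Your pigeonhole phrasing of the lower bound and the remark about $\emptyset$ and small $k$ match the paper's treatment (which simply declares $k=1$ obvious), so there is nothing substantively different to compare.
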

\begin{proof}
The claim is obvious in case $k=1$. So, let $k\ge 2$. Note first that $\{\{i\}\mid 1\le i\le k\}$ is a clique in $G[k]$ and the vertex $[k]$ is non-adjacent to all vertices in this clique. Thus, for each $1\le i\le k$, any complete witness for $G[k]$ must contain an independent set containing the two vertices $[k]$ and $\{i\}$. Therefore, any complete witness for $G[k]$ must have at least $k$ independent sets, hence $cow(G[k])\ge k$. On the other hand, the $k$ independent sets $\N_i:=\{M\subseteq [k]\mid i\in M\}$, $1\le i\le k$, form a complete witness for $G[k]$: if $M\not=L\subseteq [k]$ are two non-adjacent vertices in $G[k]$, {\em i.e.}, $M\cap L\not= \emptyset$, then $M,L\in \N_i$ for any $i\in M\cap L$. Hence $cow(G[k])\le k$.
\qed
\end{proof}

Note that in case of $k\ge 2$, the empty set is the unique universal vertex of $G[k]$.
For technical reason, we say that $G[1]=K_2$ has only one universal vertex.
{\em Substituting} a vertex $v$ in a graph $G$ by a graph $H$ results in the graph obtained from $(G-v)\cup H$ by adding all edges between vertices in $N_G(v)$ and vertices in $H$.
We now are able to characterize $k$-probe complete graphs as follows.

\begin{lemma}\label{lem:Gk}
A graph is a $k$-probe complete graph if and only if it is obtained from $G[k]$ by substituting the universal vertex by a (possibly empty) clique and other vertices by (possibly empty) independent sets.
\end{lemma}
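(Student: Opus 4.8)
The plan is to prove both directions of the equivalence, using Proposition~\ref{prop:Gk} together with Proposition~\ref{prop:basic} to handle universal vertices, and the witness $\N_i = \{M \subseteq [k] : i \in M\}$ exhibited in the proof of Proposition~\ref{prop:Gk} as the canonical coloring of $G[k]$.

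First I would handle the "if" direction. Suppose $H$ is obtained from $G[k]$ by substituting the universal vertex $\emptyset$ by a clique $K$ and each other vertex $M \subseteq [k]$ (with $M \neq \emptyset$) by an independent set $I_M$. I claim $cow(H) \le k$. Take the $k$ independent sets $\N_i = \{M : i \in M\}$ of $G[k]$ and lift them: define $\widehat{\N}_i = \bigcup_{M : i \in M} I_M$. Since distinct $I_M$'s that all contain $i$ correspond to vertices $M$ that pairwise intersect (all contain $i$), hence are pairwise non-adjacent in $G[k]$, the union $\widehat{\N}_i$ is an independent set in $H$ (there are no edges within any $I_M$, and no edges between $I_M$ and $I_L$ when $M \cap L \neq \emptyset$). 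Now I must check these $k$ sets form a complete witness: given two non-adjacent vertices $x \neq y$ of $H$, they lie in blobs $I_M, I_L$ (or possibly the same blob) with $M \cap L \neq \emptyset$ — here I use that $K$ is a clique, so non-adjacent pairs never involve the substituted-universal-vertex blob, and that two vertices in different blobs $I_M, I_L$ are non-adjacent only when $M \cap L \neq \emptyset$. Picking any $i \in M \cap L$ gives $x, y \in \widehat{\N}_i$. Hence $cow(H) \le k$. For the reverse inequality $cow(H) \ge k$, I would locate inside $H$ the same obstruction used in Proposition~\ref{prop:Gk}: a clique on representatives of $\{1\}, \ldots, \{k\}$ together with a vertex of $I_{[k]}$ non-adjacent to all of them; this forces any complete witness to use at least $k$ independent sets. (If some blob is empty one can invoke Proposition~\ref{prop:basic}(1)–(2) to reduce cleanly, or simply argue that emptying blobs only decreases $cow$ and the lower bound still needs the non-empty ones — I would phrase this carefully.)

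For the "only if" direction, let $H$ be a $k$-probe complete graph with complete witness $\N_1, \ldots, \N_k$. By Proposition~\ref{prop:basic} I may first peel off universal vertices and merge twins, but the cleaner route is: for each vertex $v$ of $H$ define its \emph{type} $\tau(v) = \{i \in [k] : v \in \N_i\} \subseteq [k]$. Two non-adjacent vertices $v, w$ must share some $\N_i$, so $\tau(v) \cap \tau(w) \neq \emptyset$; contrapositively, $\tau(v) \cap \tau(w) = \emptyset \Rightarrow vw \in E(H)$. Conversely any two vertices of the same type are non-adjacent (if $\tau(v) = \tau(w)$ and $v \in \N_i$ then both lie in $\N_i$, which is independent) — so each type class is an independent set, \emph{except} that a vertex of type $\emptyset$ belongs to no $\N_i$; two such vertices need not be non-adjacent, and in fact a vertex of type $\emptyset$ is adjacent to every vertex it is "supposed" to be, so the class of type-$\emptyset$ vertices induces a clique. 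Now map $H$ onto $G[k]$ by sending each vertex to its type: this realizes $H$ as obtained from the quotient by substituting, at each type $M \in \mathcal{P}[k]$, the set of vertices of that type — an independent set when $M \neq \emptyset$ and a clique when $M = \emptyset$. I must check the edge structure of the quotient is exactly that of $G[k]$: between type classes $M \neq L$, either all edges are present or none, and this is dictated purely by whether $M \cap L = \emptyset$ — one direction ("$M \cap L = \emptyset$ forces a complete join") is the implication above; the other ("$M \cap L \neq \emptyset$ forces no edges") needs a small argument, since a priori the witness condition only constrains non-edges. Here is where I would use that $H$ embeds into a complete graph with the extra edges confined to the $\N_i$: if $v \in M$, $w \in L$ with $M \cap L \neq \emptyset$, pick $i \in M \cap L$; were $vw \in E(H)$, consider swapping... actually the direct argument is: since $\N_i$ is an independent set of $G$ and $v, w \in \N_i$, the edge $vw$ is \emph{not} in $G = H$. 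That settles it. Finally, $\emptyset \in \mathcal{P}[k]$ is the universal vertex of $G[k]$ (for $k \ge 2$), matching the statement, and the $k=1$ case is checked by hand.

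The main obstacle I anticipate is the bookkeeping around empty substitution classes and universal vertices: $G[k]$ has the vertex $\emptyset$ as its unique universal vertex, and the statement allows substituting it by an empty clique, so I need the two directions to interact correctly with Proposition~\ref{prop:basic} — in particular, in the "if" direction establishing $cow(H) \ge k$ when several blobs are empty, and in the "only if" direction ensuring that even if no vertex of $H$ has a given type $M$, presenting $H$ as a substitution instance of $G[k]$ with that blob empty is legitimate. I expect this to be routine once phrased via the type function, but it is the place where a sloppy argument would break down; the substantive content — the correspondence between the canonical witness of $G[k]$ and arbitrary witnesses, and the fact that type classes are independent (resp. the $\emptyset$-class is a clique) — is straightforward.
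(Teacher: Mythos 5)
Your proposal is correct on every point that the lemma actually requires, and it follows essentially the same route as the paper: in one direction you lift the canonical witness $\N_i=\{M : i\in M\}$ of $G[k]$ to the blobs, exactly as the paper does; in the other direction your type function $\tau(v)=\{i : v\in\N_i\}$ is the paper's partition into the sets $I_M$, the only cosmetic difference being that the paper first strips the universal vertices $Q$ via Proposition~\ref{prop:basic} and sets $I_\emptyset:=Q$, whereas you identify the type-$\emptyset$ class directly and observe it consists of universal vertices (which, as you note, works fine).

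One caution: the portion of your ``if'' direction devoted to $cow(H)\ge k$ should simply be dropped. In this paper ``$k$-probe complete'' means complete width \emph{at most} $k$ (the paper's own proof of this direction establishes only the upper bound, and the characterizations of $2$- and $3$-probe complete graphs by forbidden subgraphs only make sense under that reading), so no lower bound is needed; moreover the lower bound is genuinely false once blobs may be empty (e.g.\ if $I_{[k]}$ or some $I_{\{i\}}$ is empty the obstruction from Proposition~\ref{prop:Gk} disappears and $cow(H)$ can be smaller than $k$), so neither of your suggested repairs would succeed. Since that step is superfluous rather than required, this does not affect the validity of your proof of the lemma as intended.
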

\begin{proof}
First, assume that $G$ is a $k$-probe complete graph, and let $Q$ be the set of all universal vertices of $G$ (possibly $Q=\emptyset$). By Proposition~\ref{prop:basic}, $G-Q$ is $k$-probe complete. Let $\N_1,\ldots,\N_k$ be a complete witness for $G-Q$ with $k=cow(G)$. For each $M\subseteq [k]$, $M\not=\emptyset$, let
$$
I_M=\{v\in V(G)\mid v\in\bigcap_{i\in M} \N_i \setminus \bigcup_{j\not\in M} \N_j\}.
$$
Then, as $G-Q$ has no universal vertex, $V(G)\setminus Q= \bigcup_{M}I_M$ is a partition in pairwise disjoint (possible empty) independent sets $I_M$. Observe that, for any non-empty $M,L\subseteq [k]$,
$$
M\cap L\not=\emptyset\, \Leftrightarrow\, \text{no vertex in $I_M$ is adjacent to a vertex in $I_L$ and vice versa}.
$$
Moreover, as $\N_1,\ldots,\N_k$ form a complete witness for $G$, we have
\begin{align*}
M\cap L = & \emptyset\, \Leftrightarrow\, \\
          & \text{every vertex in $I_M$ is adjacent to every vertex in $I_L$ and vice versa}.
\end{align*}
Now, set $I_\emptyset:= Q$ and let $G^*$ be obtained from $G$ by shrinking each $I_M$ to a single vertex $v_M$, $M\subseteq [k]$. Then the facts above show that $G^*$ is isomorphic to $G[k]$ via the bijection $v_M\mapsto M$, and thus, $G$ is obtained from $G[k]$ by substituting the universal vertex $\emptyset$ by the clique $Q$ and other vertices $M$ by the independent sets $I_M$.

For the other direction, suppose that $G$ is obtained from $G[k]$ by substituting the universal vertex by a (possibly empty) clique and other vertices $v$ by (possibly empty) independent sets $I_v$. Then $G$ is a $k$-probe complete graph. Indeed, by Proposition~\ref{prop:Gk}, $G[k]$ is a $k$-probe complete graph. Let $\N_1,\ldots, \N_k$ be a complete witness for $G[k]$. Then the independent sets $\N_i':= \bigcup_{v\in \N_i} I_v$, $1\le i\le k$, form a complete witness for $G$. Consider two arbitrary non-adjacent vertices $x\not=y$ of $G$. If $x,y\in I_v$ for some $v\in V(G[k])$, then $v$ is not the universal vertex of $G[k]$, hence $v\in \N_i$ for some $1\le i\le k$ and therefore $x,y\in \N_i'$. If $x\in I_u$ and $y\in I_v$ for some $u\not=v\in V(G[k])$, then, as $x$ and $y$ are non-adjacent in $G$, $u$ and $v$ are non-adjacent in $G[k]$. Hence $u,v\in\N_i$ for some $1\le i\le k$ and therefore $x,y\in\N_i'$.
\qed
\end{proof}


\begin{theorem}\label{thm:kernel}
\textsc{complete width} (and hence \textsc{edge clique cover}) admits a problem kernel of at most $2^k$ vertices.
\end{theorem}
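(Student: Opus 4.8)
The plan is to turn Proposition~\ref{prop:basic} into two explicit reduction rules and then to bound the number of vertices of a fully reduced yes-instance by means of Lemma~\ref{lem:Gk}; throughout I assume $k\ge 1$, the case $k=0$ being decided directly by testing whether $G$ is a complete graph. Given an instance $(G,k)$ of \textsc{complete width}, I would repeatedly apply the following rules, using (R1) whenever possible and (R2) otherwise: (R1) if $G$ has a universal vertex $v$, pass to $(G-v,k)$; (R2) if $G$ has no universal vertex but has two distinct vertices $u,v$ with $N(u)=N(v)$, pass to $(G-u,k)$ if $v$ is not universal in $G-u$, and to $(G-u,k-1)$ otherwise. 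Each application deletes one vertex and is computable in polynomial time, so the process terminates after at most $|V(G)|$ steps with an instance $(G',k')$ in which $G'$ has no universal vertex and no two distinct vertices with equal neighbourhood, and $k'\le k$. By Proposition~\ref{prop:basic}(1) rule (R1) leaves $cow(\cdot)$ unchanged, and by Proposition~\ref{prop:basic}(2) rule (R2) either leaves both $cow(\cdot)$ and the parameter unchanged or decreases both by exactly $1$; hence the property that the complete width of the current graph is at most the current parameter is invariant, so $(G',k')$ and $(G,k)$ are equivalent.

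Next I would show that if $(G',k')$ is a yes-instance, i.e.\ $m:=cow(G')\le k'$, then $|V(G')|\le 2^{k'}-1$. If $G'$ is empty this is clear; otherwise $G'$ is not complete (it has no universal vertex), so $m\ge 1$ and Lemma~\ref{lem:Gk} applies: $G'$ is obtained from $G[m]$ by substituting the universal vertex of $G[m]$ by a clique $Q$ and each of the remaining vertices $M$ of $G[m]$ by an independent set $I_M$. If $Q$ were nonempty, a vertex of $Q$ would be adjacent in $G'$ to every other vertex (to all of $Q$, a clique, and to all of every $I_M$, since the universal vertex of $G[m]$ is adjacent to all of $G[m]$), contradicting that $G'$ has no universal vertex; so $Q=\emptyset$. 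If some $I_M$ contained two distinct vertices, they would be vertices of $G'$ with the same neighbourhood, again a contradiction; so $|I_M|\le 1$ for every non-universal vertex $M$. Since $|V(G[m])|=2^m$, this gives $|V(G')|=\sum_M|I_M|\le 2^m-1\le 2^{k'}-1\le 2^k-1$.

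The kernelization then outputs $(G',k')$ when $k'\ge 0$ and $|V(G')|\le 2^k$, and the fixed no-instance $(\overline{K_2},0)$ otherwise. This is correct: if $k'<0$ then $cow(G')\ge 0>k'$; and if $k'\ge 0$ but $|V(G')|>2^k\ge 2^{k'}>2^{k'}-1$, then $cow(G')>k'$ by the bound just proved; in either case $(G',k')$, and hence $(G,k)$, is a no-instance, and so is $(\overline{K_2},0)$, which has $2\le 2^k$ vertices. Thus in all cases the algorithm produces, in polynomial time, an equivalent instance with parameter at most $k$ and at most $2^k$ vertices, which proves the theorem for \textsc{complete width}; the statement for \textsc{edge clique cover} then follows from Proposition~\ref{prop:cow-ecc}. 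I expect rule (R2) to be the delicate point: the ``$+1$'' alternative in Proposition~\ref{prop:basic}(2) forces the parameter to be tracked along the reduction, and one must also keep in mind that a reduced no-instance can still be large, which is precisely what the escape clause emitting a trivial no-instance takes care of.
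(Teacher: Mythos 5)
Your proposal is correct and follows essentially the same route as the paper: reduce the instance using Proposition~\ref{prop:basic} (delete universal vertices and false twins) and then invoke Lemma~\ref{lem:Gk} to conclude that a reduced yes-instance embeds into $G[k]$, hence has at most $2^k$ vertices. Your extra bookkeeping---tracking the parameter decrement in rule (R2), the $Q=\emptyset$ and $|I_M|\le 1$ argument giving the slightly sharper bound $2^m-1$, and the trivial no-instance escape clause---only makes explicit details the paper leaves implicit, and is all sound.
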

\begin{proof}
Let $(G,k)$ be an instance of \textsc{complete width}.
By Proposition~\ref{prop:basic}, we may assume that $G$ has no universal vertices and $N(u)\not= N(v)$ for any non-adjacent vertices $u, v$.
Thus, by Lemma~\ref{lem:Gk}, $G$ is (isomorphic to) an induced subgraph of $G[k]$, whenever $G$ is a $k$-probe complete graph.
Since $G[k]$ has $2^k$ vertices, Theorem~\ref{thm:kernel} follows.
\qed
\end{proof}

We remark that it was shown in \cite{CKPPW} that \textsc{edge clique cover}, hence \textsc{complete width}, has no kernel of polynomial size, unless certain complexity assumption fails.

\section{Graphs of small complete width}
We describe in this section graphs of small complete width $k\le 3$. These are particularly $2K_2$-free and our  descriptions are good in the sense that they imply polynomial-time recognition for these graph classes.

\subsection{Complete width-1 and complete width-2 graphs}

A \emph{complete split graph} is a split graph $G=(Q+S,E)$ such that every vertex in the clique $Q$ is adjacent to every vertex in the independent set $S$. Such a partition is also called a \emph{complete split partition} of a split graph. Note that if the complete split graph $G=(Q+S,E)$ is not a clique, then $G$ has exactly one complete split partition $V=Q\cup S$. Furthermore, each vertex in $Q$, if any, is a universal vertex.

Graphs of complete width one can be characterized as follows.

\begin{theorem}\label{thm:completesplit}
The following statements are equivalent. 
 \begin{itemize}
  \item[\em (i)] $G$ is a probe complete graph;
  \item[\em (ii)] $G$ is a $(K_2+K_1, C_4)$-free graph;
  \item[\em (iii)] $G$ is a complete split graph;
  \item[\em (iv)] $G$ is obtained from a $K_2$ by substituting one vertex by a clique and the other vertex by an independent set.
 \end{itemize}
\end{theorem}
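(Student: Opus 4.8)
The plan is to prove the four statements equivalent by showing a short cycle of implications, say (i) $\Rightarrow$ (ii) $\Rightarrow$ (iii) $\Rightarrow$ (iv) $\Rightarrow$ (i), and then invoke already-established machinery wherever possible. For (i) $\Rightarrow$ (ii): suppose $G$ is a probe complete graph, i.e. $cow(G)\le 1$, with complete witness a single independent set $\N$. If $G$ contained an induced $K_2+K_1$ on vertices $a,b$ (adjacent) and $c$ (isolated from both), then both non-edges $ac$ and $bc$ would have to be filled, forcing $\{a,c\}\subseteq\N$ and $\{b,c\}\subseteq\N$, hence $\{a,b\}\subseteq\N$, contradicting that $\N$ is independent while $ab\in E$. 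Similarly an induced $C_4=a b c d$ has the two non-edges $ac$ and $bd$; filling them forces $\{a,c\}\subseteq\N$ and $\{b,d\}\subseteq\N$, but then $\N$ meets all four vertices and contains, e.g., the edge $ab$, a contradiction. So $G$ is $(K_2+K_1,C_4)$-free.

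For (ii) $\Rightarrow$ (iii): let $G$ be $(K_2+K_1,C_4)$-free. If $G$ is a clique we are done (take $S=\emptyset$). Otherwise $G$ has a non-edge. The key observation is that $(K_2+K_1)$-free means the non-neighborhood of every vertex is a clique, equivalently $\overline{G}$ is triangle-free; combined with $C_4$-freeness of $G$ (i.e. $\overline{G}$ has no $2K_2$), $\overline{G}$ is a $(K_3,2K_2)$-free graph, which forces $\overline{G}$ to be a disjoint union of a star $K_{1,t}$ plus isolated vertices (a $2K_2$-free triangle-free graph is a star forest, and adding $2K_2$-freeness across components collapses it to one nontrivial star). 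Translating back: $\overline{G}=K_{1,t}+ \overline{K_m}$ means $G$ is a complete split graph, with $Q$ the set of universal vertices and $S$ the independent set. I would present this directly: fix a vertex $u$ of minimum degree; show $\overline{N[u]}$ together with the way it attaches forces the complete split structure, checking the two forbidden subgraphs rule out any deviation.

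The implication (iii) $\Rightarrow$ (iv) is essentially a restatement: a complete split graph $G=(Q+S,E)$ with every $q\in Q$ adjacent to every $s\in S$ is precisely the graph obtained from $K_2$ (vertices $p,q$) by substituting $p$ by the clique $Q$ and $q$ by the independent set $S$; one must also handle the degenerate cases $Q=\emptyset$ or $S=\emptyset$, which correspond to substituting by the empty graph. Finally (iv) $\Rightarrow$ (i) follows immediately from Lemma~\ref{lem:Gk} with $k=1$, since $G[1]=K_2$ and its unique universal vertex gets the clique while the other gets the independent set; alternatively, directly exhibit the complete witness $\N=S$, which is independent and whose pairwise-filling (making $S$ a clique) turns $G$ into $K_n$.

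The routine checking of forbidden subgraphs is unremarkable; the one step that needs genuine care is (ii) $\Rightarrow$ (iii), where I must argue that the two local obstructions $K_2+K_1$ and $C_4$ really do pin down the global complete split structure and that no other configuration survives. I expect the cleanest route is via the complement: recognizing $(K_2+K_1,C_4)$-free as ``$\overline{G}$ is $(K_3,2K_2)$-free'' and then classifying such graphs as star forests collapsed to a single star plus isolated vertices. This is the place where a careless argument could miss an edge case (e.g. $G$ disconnected, or $Q$ empty), so I would state the complement characterization as a small lemma and dispatch the cases explicitly.
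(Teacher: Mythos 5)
Your implications (i)~$\Rightarrow$~(ii), (iii)~$\Rightarrow$~(iv) and (iv)~$\Rightarrow$~(i) are fine (the paper itself disposes of (i)$\Leftrightarrow$(iv) via Lemma~\ref{lem:Gk} and simply cites \cite{LePeng12} for the equivalence of (i), (ii), (iii), so a self-contained argument is a legitimate alternative). But your one step that ``needs genuine care,'' (ii)~$\Rightarrow$~(iii), is broken as written. The complement translation is wrong: $G$ being $(K_2+K_1)$-free says that the \emph{non}-neighborhood of every vertex is an \emph{independent set} (if $a,b$ are nonneighbors of $c$ and $ab\in E$, then $\{a,b,c\}$ induces $K_2+K_1$), which in the complement means $\overline{G}$ is $P_3$-free, i.e.\ a disjoint union of cliques. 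What you wrote --- ``the non-neighborhood of every vertex is a clique, equivalently $\overline{G}$ is triangle-free'' --- is the condition for $3K_1$-freeness, a different (and here irrelevant) hypothesis; note that $3K_1$ itself is $(K_2+K_1,C_4)$-free, so (ii) certainly does not force $\overline{G}$ to be triangle-free. The follow-up classification is also false: $(K_3,2K_2)$-free graphs are not star forests collapsed to one star ($C_4$, $C_5$, $K_{2,3}$ are all triangle-free and $2K_2$-free). Worse, even granting your chain, it does not deliver (iii): taking $\overline{G}=K_{1,2}$ (a star) gives $G=K_2+K_1$, which is not a complete split graph --- it is precisely one of the graphs forbidden in (ii) --- so translating ``star plus isolated vertices'' back does not yield the complete split structure.

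The repair is short and stays within your own framework: from (ii), $\overline{G}$ is $(P_3,2K_2)$-free; $P_3$-freeness makes $\overline{G}$ a disjoint union of cliques, and $2K_2$-freeness allows at most one of these cliques to be nontrivial, so $\overline{G}=K_t+mK_1$ for some $t,m\ge 0$. Hence $G=\overline{K_t}\star K_m$, the join of an independent set and a clique, which is exactly a complete split graph (degenerate cases $t=0$ or $m=0$ included). With that substitution your cycle of implications is complete and correct; your fallback sketch (``fix a vertex of minimum degree\dots'') is too vague to count as a proof on its own, so the corrected complement lemma should be stated and proved explicitly.
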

\begin{proof}
The equivalence of (i), (ii) and (iii) has been shown in~\cite{LePeng12}. The equivalence of (i) and (iv) follows from Lemma~\ref{lem:Gk}.
\qed
\end{proof}

Graphs of complete width at most~two can be characterized as follows.

\begin{theorem}\label{thm:2-probecomplete}
The following statements are equivalent. 
 \begin{itemize}
  \item[\em (i)]  $G$ is a $2$-probe complete graph;
  \item[\em (ii)] $G$ is $(2K_2, P_4, K_3+ K_1, (K_2+K_1)\star 2K_1, C_4\star 2K_1)$-free; see Fig.~\ref{fig:2-probecomplete};
  \item[\em (iii)] $G$ is obtained from $G[2]=(K_2+K_1)\star K_1$ by substituting the universal vertex by a clique and the other vertices by independent sets.
 \end{itemize}
\end{theorem}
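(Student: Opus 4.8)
The plan is to prove Theorem~\ref{thm:2-probecomplete} by establishing the cycle of implications (iii)$\Rightarrow$(i)$\Rightarrow$(ii)$\Rightarrow$(iii), mirroring the structure already used for Theorem~\ref{thm:completesplit}. The implications (iii)$\Rightarrow$(i) and the reverse direction (i)$\Rightarrow$(iii) are in fact immediate from Lemma~\ref{lem:Gk} applied with $k=2$: a graph is a $2$-probe complete graph if and only if it is obtained from $G[2]=(K_2+K_1)\star K_1$ by substituting the unique universal vertex $\emptyset$ by a (possibly empty) clique and the other three vertices by (possibly empty) independent sets. So the genuine content is the equivalence with the forbidden-subgraph condition (ii). Here I would split the work into (i)$\Rightarrow$(ii), showing that none of the five listed graphs can occur in a $2$-probe complete graph, and (ii)$\Rightarrow$(iii), the structural direction.

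For (i)$\Rightarrow$(ii) the cleanest route is to verify directly that each of the five forbidden graphs $H \in \{2K_2, P_4, K_3+K_1, (K_2+K_1)\star 2K_1, C_4\star 2K_1\}$ has $cow(H)\ge 3$; since complete width is monotone under taking induced subgraphs (an induced subgraph of a $k$-probe complete graph is $k$-probe complete, which follows from the witness restricting), this rules them out. Equivalently, by Proposition~\ref{prop:cow-ecc}, one checks $\theta_\mathrm{e}(\overline H)\ge 3$ for each complement; for instance $\overline{2K_2}=C_4$ has $\theta_\mathrm{e}=3$, $\overline{P_4}=P_4$, $\overline{K_3+K_1}=K_{1,3}$, and the last two complements are small graphs whose edge sets provably need three cliques. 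These are finite case checks and I would present them compactly. An alternative that avoids enumeration is to use Lemma~\ref{lem:Gk}: a $2$-probe complete graph has a $2$-coloring of its ``non-trivial'' part (after deleting universal vertices and identifying twins) by the labels $\{1\},\{2\},\{1,2\}$, and one argues that each of the five obstructions, after this normalization, cannot be so labeled because it contains either an induced $2K_2$ (impossible since $G[2]$ is $2K_2$-free) or too many pairwise-distinct neighborhoods.

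The substantive direction is (ii)$\Rightarrow$(iii). Let $G$ be $(2K_2, P_4, K_3+K_1, (K_2+K_1)\star 2K_1, C_4\star 2K_1)$-free. By Proposition~\ref{prop:basic} I may assume $G$ has no universal vertex and no two non-adjacent vertices with equal neighborhoods; the goal is then to show $G$ is (isomorphic to) an induced subgraph of $G[2]$, i.e.\ has at most the three non-universal vertex types of $G[2]$, with the adjacency pattern of $G[2]$. Concretely, I want to produce a partition $V(G)=A\cup B\cup C$ into independent sets with $A,B$ completely joined to $C$, no edges between $A$ and $B$ (this realizes $G$ as a substitution of $G[2]$ with the universal vertex empty). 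Since $G$ is $2K_2$-free with no universal vertex and (being $P_4$-free and $2K_2$-free) is in a very restricted class, I would argue that $\overline G$ is $C_4$-free (from $2K_2$-freeness of $G$) and $P_4$-free (from $P_4$-freeness of $G$), hence $\overline G$ is a $(P_4,C_4)$-free graph, i.e.\ a trivially perfect graph / quasi-threshold graph, whose structure is a rooted forest of cliques. The extra forbidden graphs $K_3+K_1$, $(K_2+K_1)\star 2K_1$, $C_4\star 2K_1$ then translate into complement conditions that force this forest to have a very small ``height/width'', namely that $\overline G$ has an edge clique cover of size $2$ after removing simplicial-type vertices — equivalently that $G$ has the claimed three-part structure. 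I would make this precise by taking a maximal clique $C$ of $G$ among the non-universal vertices (all of its vertices being non-universal), letting $A$ and $B$ be the two ``sides'' of the remaining vertices determined by which vertices of $C$ they miss, and using the forbidden $K_3+K_1$ and the two eight-vertex obstructions to show there are exactly these two sides and that $A$ (resp.\ $B$) is independent and fully non-adjacent to the other.

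The main obstacle I expect is precisely the bookkeeping in (ii)$\Rightarrow$(iii): showing that the five forbidden induced subgraphs are \emph{exactly} enough to force the $G[2]$-substitution structure, with no further obstruction needed. The risk is a missing small configuration; I would guard against it by first deleting universal vertices and twins (legitimate by Proposition~\ref{prop:basic}), then handling the reduced graph $G_0$, and proving that $G_0$ has at most $3$ vertices with the $G[2]$ adjacencies by a short exhaustive argument on the possible neighborhood-intersection patterns, invoking each forbidden graph to kill the offending case. Once the reduced graph is pinned down to an induced subgraph of $G[2]$, reattaching the twins and universal vertices gives exactly the substitution description in (iii), closing the cycle.
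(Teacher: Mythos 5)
Your skeleton is reasonable, and the (i)$\Leftrightarrow$(iii) part via Lemma~\ref{lem:Gk} is exactly what the paper does; the (i)$\Rightarrow$(ii) finite checks are also fine in principle (although $\theta_\mathrm{e}(C_4)=4$, not $3$ -- harmless, since only $\ge 3$ is needed). Note, however, that the paper itself does not prove (i)$\Leftrightarrow$(ii) at all: it cites~\cite{LePeng14}. So the only substantive content of your proposal is a self-contained proof of (ii)$\Rightarrow$(iii), and that is precisely where there is a genuine gap.

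First, the concrete target structure you state is not the structure in (iii). In $G[2]$ the vertex $\{1,2\}$ is adjacent only to the universal vertex $\emptyset$, so after deleting universal vertices a graph of type (iii) is a join of two independent sets \emph{together with a disjoint independent set of isolated vertices}; no class is joined to two others. Your stated goal ``$V(G)=A\cup B\cup C$ into independent sets with $A,B$ completely joined to $C$ and no edges between $A$ and $B$'' describes a substitution of $P_3$ (a complete bipartite graph with parts $A\cup B$ and $C$), which both fails to match (iii) and excludes legitimate reduced instances: $K_2+K_1$ has no universal vertex, no non-adjacent vertices with equal neighborhoods, satisfies (ii), and is of type (iii), but is not of your form. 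So the argument as planned either proves a false statement or breaks on such graphs. Second, the actual work -- showing that the five listed graphs are \emph{enough} -- is deferred to ``a short exhaustive argument,'' and the details you do give suggest it has not been worked out: there are no ``eight-vertex obstructions'' in this list (the large ones have $5$ and $6$ vertices), and the reattachment of twins is not automatic either: by Proposition~\ref{prop:basic}(2), adding back a non-adjacent twin $u$ of a vertex $v$ that is universal in $G-u$ \emph{increases} the complete width, so one must invoke the forbidden subgraphs (this is exactly where $(K_2+K_1)\star 2K_1$ and $C_4\star 2K_1$ are needed) to exclude that case; your sketch does not engage with this. Until the (ii)$\Rightarrow$(iii) case analysis is carried out with the correct target structure (join of two independent sets, plus an isolated independent class, plus the universal clique), the proof is incomplete.
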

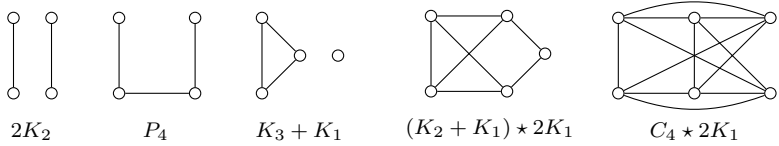
\begin{figure}[htb]
\begin{center}
\begin{tikzpicture}[scale=.5]
\tikzstyle{vertex}=[draw,circle,inner sep=1.5pt]
\node[vertex] (1) at (0,0) {};
\node[vertex] (2) at (0,2) {};
\node[vertex] (3) at (1,2) {};
\node[vertex] (4) at (1,0) {};

\draw (1) -- (2);
\draw (3) -- (4);
\node[] (name) at (-0.5,-1) [label=right:{$2K_2$}] {};
\end{tikzpicture}
\qquad
\begin{tikzpicture}[scale=.5]
\tikzstyle{vertex}=[draw,circle,inner sep=1.5pt]
\node[vertex] (1) at (0,0) {};
\node[vertex] (2) at (0,2) {};
\node[vertex] (3) at (2,2) {};
\node[vertex] (4) at (2,0) {};

\draw (2) -- (1) -- (4) -- (3);
\node[] (name) at (0.2,-1) [label=right:{$P_4$}] {};
\end{tikzpicture}
\quad
\begin{tikzpicture}[scale=.5]
\tikzstyle{vertex}=[draw,circle,inner sep=1.5pt]
\node[vertex] (1) at (0,0) {};
\node[vertex] (2) at (0,2) {};
\node[vertex] (3) at (1,1) {};
\node[vertex] (4) at (2,1) {};

\draw (1) -- (2) -- (3) -- (1);
\node[] (name) at (-0.6,-1) [label=right:{$K_3+K_1$}] {};
\end{tikzpicture}
\quad
\begin{tikzpicture}[scale=.5]
\tikzstyle{vertex}=[draw,circle,inner sep=1.5pt]
\node[vertex] (1) at (1,0) {};
\node[vertex] (3) at (1,2) {};
\node[vertex] (4) at (3,2) {};
\node[vertex] (5) at (4,1) {};
\node[vertex] (6) at (3,0) {};

\draw (3) -- (4) -- (5) -- (6) -- (1);
\draw (4) -- (1) -- (3) -- (6);
\node[] (name) at (-0.1,-1) [label=right:{$(K_2+K_1)\star 2K_1$}] {};
\end{tikzpicture}
\quad
\begin{tikzpicture}[scale=.5]
\tikzstyle{vertex}=[draw,circle,inner sep=1.5pt]
\node[vertex] (1) at (0,0) {};
\node[vertex] (2) at (0,2) {};
\node[vertex] (3) at (2,2) {};
\node[vertex] (4) at (2,0) {};
\node[vertex] (5) at (4,0) {};
\node[vertex] (6) at (4,2) {};

\draw (1) -- (2) -- (3) -- (4) -- (1);
\draw (4) -- (5) -- (3) -- (6) -- (4);
\draw (6) -- (1) to[bend angle=20, bend right] (5);
\draw (5) -- (2) to[bend angle=20, bend left] (6);
\node[] (name) at (0.4,-1) [label=right:{$C_4\star 2K_1$}] {};
\end{tikzpicture}
\end{center}
\caption{Forbidden induced subgraphs for $2$-probe complete graphs.}
\label{fig:2-probecomplete}
\end{figure}
\begin{proof}
The equivalence of (i) and (iii) has been shown in~\cite{LePeng14}. The equivalence of (i) and (iii) follows from Lemma~\ref{lem:Gk}.
\qed
\end{proof}

\subsection{Complete width-3 graphs}

Graphs of complete width at most~3 can be characterized as follows.

\begin{theorem}\label{thm:3-probecomplete}
The following statements are equivalent. 
 \begin{itemize}
  \item[\em (i)]  $G$ is a $3$-probe complete graph;
  \item[\em (ii)] $G$ is $(F_1,\ldots, F_{14})$-free; see Fig.~\ref{fig:3-probecomplete};
  \item[\em (iii)] $G$ is obtained from $G[3]=(\text{Net}+K_1)\star K_1$ by substituting the universal vertex by a clique and the other vertices by independent sets.
 \end{itemize}
\end{theorem}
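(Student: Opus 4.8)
The plan is to establish the equivalences exactly as in Theorems~\ref{thm:completesplit} and~\ref{thm:2-probecomplete}. The equivalence of (i) and (iii) is immediate from Lemma~\ref{lem:Gk}: by that lemma a graph has complete width at most $3$ if and only if it is obtained from $G[k]$ for some $k\le 3$ by substituting the universal vertex by a clique and the remaining vertices by independent sets, and since $G[1]$ and $G[2]$ are induced subgraphs of $G[3]$ obtained by deleting vertices (equivalently, substituting those vertices by the empty independent set), this is the same as saying $G$ is obtained from $G[3]=(\text{Net}+K_1)\star K_1$ in the stated way. So the real content is the equivalence with the finite forbidden-subgraph characterization (ii).

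For the direction (iii)$\Rightarrow$(ii), I would argue that each of the fourteen graphs $F_1,\dots,F_{14}$ has complete width at least $4$, hence cannot occur as an induced subgraph of a $3$-probe complete graph (complete width is hereditary under taking induced subgraphs, which follows directly from the definition of a complete witness by restricting each $\N_i$ to the vertex set of the subgraph). Concretely, for each $F_j$ one exhibits either a clique $K_4$ together with a vertex non-adjacent to all of it — which forces four independent sets by the argument used in Proposition~\ref{prop:Gk} — or more generally a configuration showing $\theta_\mathrm{e}(\overline{F_j})\ge 4$ via Proposition~\ref{prop:cow-ecc}; this is a finite check, one graph at a time.

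For the harder direction (ii)$\Rightarrow$(iii) (equivalently (ii)$\Rightarrow$(i)), I would take a graph $G$ that is $(F_1,\dots,F_{14})$-free and, after removing universal vertices and identifying twins via Proposition~\ref{prop:basic} (so we may assume $G$ has no universal vertex and $N(u)\ne N(v)$ for non-adjacent $u,v$), show that $G$ embeds as an induced subgraph of $G[3]$. The strategy is to reconstruct the three independent sets $\N_1,\N_2,\N_3$ directly: since $G$ is $2K_2$-free and $P_4$-free and $K_3+K_1$-free (these should be among the $F_j$, being precisely the obstructions of Theorem~\ref{thm:2-probecomplete} that are induced in $F$'s, or at least they follow from the list), the complement $\overline{G}$ has a restricted structure, and one analyzes the ``type'' partition $\{I_M : \emptyset\ne M\subseteq[3]\}$ that a hypothetical witness would induce. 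Each vertex must receive a label $M\subseteq\{1,2,3\}$ so that two vertices are non-adjacent iff their labels intersect; the forbidden subgraphs $F_1,\dots,F_{14}$ are exactly chosen to rule out every local configuration that would make such a consistent labelling impossible. I would proceed by a case analysis on the structure of $\overline{G}$: first locate a maximum clique $Q^*$ of $\overline{G}$ (a maximum independent set of $G$) — it has size at most $7$ since the seven non-empty subsets of $\{1,2,3\}$ give the only mutually-intersecting-free, i.e.\ pairwise-non-adjacent-in-$G$, possibilities — then show each remaining vertex attaches to $Q^*$ in one of the few permissible ways, with the forbidden graphs eliminating the bad attachments, and finally verify the resulting labelling is globally consistent (again using the forbidden list to exclude ``triangle of labels'' inconsistencies of the form $M\cap L=\emptyset$, $L\cap K=\emptyset$, $M\cap K=\emptyset$ coexisting with edges that contradict transitivity of the witness).

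\textbf{Main obstacle.} The bulk of the work, and the place where errors are easy to make, is the case analysis in (ii)$\Rightarrow$(iii): one must verify that the specific list of fourteen graphs is both necessary and \emph{sufficient} — that no thirteen of them suffice and that these fourteen genuinely block every obstruction to a valid $3$-labelling. In practice I expect this to be organized around the structure of $\overline{G}$ near a maximum independent set of $G$, pushing the combinatorics of ``which subsets of $[3]$ can label adjacent/non-adjacent pairs'' until each $F_j$ is forced to appear whenever the labelling fails; making sure the enumeration of local configurations is exhaustive is the delicate part, whereas the (iii)$\Rightarrow$(ii) direction and the (i)$\Leftrightarrow$(iii) equivalence are routine.
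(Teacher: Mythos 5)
Your handling of (i)$\Leftrightarrow$(iii) via Lemma~\ref{lem:Gk} and of (iii)$\Rightarrow$(ii) as a finite check (each $F_j$ has complete width at least $4$, and complete width is hereditary) matches the paper and is fine. The problem is the hard direction (ii)$\Rightarrow$(i): what you give is a plan, not a proof, and the plan rests on premises that are false. You assert that an $(F_1,\dots,F_{14})$-free graph is $P_4$-free and $(K_3+K_1)$-free because ``these should be among the $F_j$.'' They are not, and they cannot be: $K_3+K_1$ has complete width $3$ (its complement is $K_{1,3}$, with $\theta_\mathrm{e}=3$), and $P_4$ occurs as an induced subgraph of the Net inside $G[3]$, so both graphs are themselves $3$-probe complete. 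Only $2K_2=F_1$ survives from the width-$2$ obstruction list. Consequently the structural restriction on $\overline{G}$ that your labelling argument is built on collapses; likewise your bound of $7$ on a maximum independent set of $G$ does not follow from the hypotheses at that point (and in the reduced graph the relevant bound would be $4$, the largest intersecting family of nonempty subsets of $[3]$ --- but that is only available once the conclusion is already known).

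The paper's proof of (ii)$\Rightarrow$(i) is organized quite differently and the work you defer is exactly its content: after deleting universal vertices, $2K_2$-freeness gives one nontrivial component $H$ plus isolated vertices $I$; then Case~1 assumes $H$ \emph{contains} an induced $P_4$ $v_1v_2v_3v_4$, classifies the remaining vertices by their attachment sets $M_S$, $S\subseteq\{1,2,3,4\}$, uses the $F_j$'s to kill most classes and to force join/co-join relations among the survivors ($M_0,M_2,M_3,M_{13},M_{24},M_{23}$), and then writes down three explicit independent sets forming a complete witness; Case~2 ($H$ a cograph) writes $H=H_1\star H_2$, shows one factor is edgeless, and reduces to the characterization of $2$-probe complete graphs (Theorem~\ref{thm:2-probecomplete}) with subcases depending on whether $I=\emptyset$ and whether $H_2$ has a universal vertex. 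Your proposal contains neither this case split nor any substitute for it; the ``delicate exhaustive enumeration'' you flag as the main obstacle is precisely the missing proof, and as sketched it would start from an incorrect structural hypothesis. To repair it you would either have to carry out a labelling analysis that accommodates induced $P_4$'s and triangles-with-nonneighbors, or follow the paper's two-case decomposition and its reduction to the width-$2$ theorem.
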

\begin{figure}[htb]
\begin{center}
\begin{tikzpicture}[scale=.5]
\tikzstyle{vertex}=[draw,circle,inner sep=1.5pt]
\node[vertex] (1) at (0,0) {};
\node[vertex] (2) at (0,2) {};
\node[vertex] (3) at (1,2) {};
\node[vertex] (4) at (1,0) {};

\draw (1) -- (2);
\draw (3) -- (4);
\node[] (name) at (-1.1,-1) [label=right:{$F_1 = 2K_2$}] {};
\end{tikzpicture}
\qquad
\begin{tikzpicture}[scale=.5]
\tikzstyle{vertex}=[draw,circle,inner sep=1.5pt]
\node[vertex] (1) at (.8,0) {};
\node[vertex] (2) at (0.2,1.2) {};
\node[vertex] (3) at (1.5,2) {};
\node[vertex] (4) at (2.8,1.2) {};
\node[vertex] (5) at (2.2,0) {};

\draw (1) -- (2) -- (3) -- (4) -- (5) -- (1);
\node[] (name) at (0,-1) [label=right:{$F_2 = C_5$}] {};
\end{tikzpicture}
\qquad
\begin{tikzpicture}[scale=.5]
\tikzstyle{vertex}=[draw,circle,inner sep=1.5pt]
\node[vertex] (1) at (.8,0) {};
\node[vertex] (2) at (0.2,1.2) {};
\node[vertex] (3) at (1.5,2) {};
\node[vertex] (4) at (2.8,1.2) {};
\node[vertex] (5) at (2.2,0) {};

\draw (1) -- (2) -- (3) -- (4) -- (5) -- (1);
\draw (2) -- (4);
\node[] (name) at (0,-1) [label=right:{$F_3 = \overline{P_5}$}] {};
\end{tikzpicture}
\qquad
\begin{tikzpicture}[scale=.5]
\tikzstyle{vertex}=[draw,circle,inner sep=1.5pt]
\node[vertex] (1) at (0,0) {};
\node[vertex] (2) at (0,2) {};
\node[vertex] (3) at (2,2) {};
\node[vertex] (4) at (2,0) {};
\node[vertex] (5) at (3,0) {};

\draw (1) -- (2);
\draw (3) -- (4);
\draw (1) -- (3) -- (2) -- (4) -- (1);
\node[] (name) at (-1,-1) [label=right:{$F_4 = K_4+K_1$}] {};
\end{tikzpicture}
\end{center}

\begin{center}
\begin{tikzpicture}[scale=.5]
\tikzstyle{vertex}=[draw,circle,inner sep=1.5pt]
\node[vertex] (1) at (0,0) {};
\node[vertex] (2) at (0,2) {};
\node[vertex] (3) at (2,2) {};
\node[vertex] (4) at (2,0) {};
\node[vertex] (5) at (4,0) {};

\draw (3) -- (1) -- (2) -- (3) -- (4) -- (1);
\draw (4) -- (5);
\node[] (name) at (1.2,-1) [label=right:{$F_5$}] {};
\end{tikzpicture}
\qquad
\begin{tikzpicture}[scale=.5]
\tikzstyle{vertex}=[draw,circle,inner sep=1.5pt]
\node[vertex] (1) at (0,0) {};
\node[vertex] (2) at (0,2) {};
\node[vertex] (3) at (2,2) {};
\node[vertex] (4) at (2,0) {};
\node[vertex] (5) at (4,0) {};
\node[vertex] (6) at (4,2) {};

\draw (1) -- (2) -- (3) -- (4) -- (1);
\draw (4) -- (5);
\draw (3) -- (6);
\node[] (name) at (1.2,-1) [label=right:{$F_6$}] {};
\end{tikzpicture}
\qquad
\begin{tikzpicture}[scale=.5]
\tikzstyle{vertex}=[draw,circle,inner sep=1.5pt]
\node[vertex] (1) at (0,0) {};
\node[vertex] (2) at (0,2) {};
\node[vertex] (3) at (2,2) {};
\node[vertex] (4) at (2,0) {};
\node[vertex] (5) at (4,0) {};
\node[vertex] (6) at (4,2) {};

\draw (1) -- (2) -- (3) -- (4) -- (1); \draw (1) -- (3); \draw (2) -- (4);
\draw (4) -- (5);
\draw (3) -- (6);
\node[] (name) at (1.2,-1) [label=right:{$F_7$}] {};
\end{tikzpicture}
\end{center}

\begin{center}
\begin{tikzpicture}[scale=.5]
\tikzstyle{vertex}=[draw,circle,inner sep=1.5pt]
\node[vertex] (1) at (0,0) {};
\node[vertex] (2) at (1,0) {};
\node[vertex] (3) at (2,0) {};
\node[vertex] (4) at (3,0) {};
\node[vertex] (5) at (1.5,2) {};
\node[vertex] (6) at (4,0) {};

\draw (5) -- (1) -- (2) -- (3) -- (4) -- (5);
\draw (2) -- (5) -- (3);
\node[] (name) at (-1.2,-1) [label=right:{$F_8=(P_4\star K_1)+K_1$}] {};
\end{tikzpicture}
\quad
\begin{tikzpicture}[scale=.5]
\tikzstyle{vertex}=[draw,circle,inner sep=1.5pt]
\node[vertex] (1) at (0,0.5) {};
\node[vertex] (2) at (1,0.5) {};
\node[vertex] (3) at (2,0.5) {};
\node[vertex] (4) at (3,0.5) {};
\node[vertex] (5) at (1.5,2) {};
\node[vertex] (6) at (1.5,-1) {};

\draw (5) -- (1) -- (2) -- (3) -- (4) -- (5);
\draw (2) -- (5) -- (3);
\draw (2) -- (6) -- (3);
\node[] (name) at (1.5,-1) [label=right:{$F_9$}] {};
\end{tikzpicture}
\quad
\begin{tikzpicture}[scale=.5]
\tikzstyle{vertex}=[draw,circle,inner sep=1.5pt]
\node[vertex] (1) at (0,0.5) {};
\node[vertex] (2) at (1,0.5) {};
\node[vertex] (3) at (2,0.5) {};
\node[vertex] (4) at (3,0.5) {};
\node[vertex] (5) at (1.5,2) {};
\node[vertex] (6) at (1.5,-1) {};

\draw (5) -- (1) -- (2) -- (3) -- (4) -- (5);
\draw (2) -- (5) -- (3);
\draw (1) -- (6) -- (4);
\draw (2) -- (6) -- (3);
\node[] (name) at (1.6,-1) [label=right:{$F_{10}$}] {}; 
\end{tikzpicture}
\quad
\begin{tikzpicture}[scale=.5]
\tikzstyle{vertex}=[draw,circle,inner sep=1.5pt]
\node[vertex] (1) at (1,0) {};
\node[vertex] (2) at (0,1) {};
\node[vertex] (3) at (1,2) {};
\node[vertex] (4) at (3,2) {};
\node[vertex] (5) at (4,1) {};
\node[vertex] (6) at (3,0) {};

\draw (1) -- (2) -- (3) -- (4) -- (5) -- (6) -- (1);
\draw (4) -- (2) -- (6);
\draw (4) -- (1) -- (3) -- (6);
\node[] (name) at (-1.2,-1) [label=right:{$F_{11}=(K_3+K_1)\star 2K_1$}] {};
\end{tikzpicture}
\end{center}

\begin{center}
\begin{tikzpicture}[scale=.5]
\tikzstyle{vertex}=[draw,circle,inner sep=1.5pt]
\node[vertex] (1) at (0,0) {};
\node[vertex] (2) at (0,2) {};
\node[vertex] (3) at (2,2) {};
\node[vertex] (4) at (2,0) {};
\node[vertex] (5) at (4,0) {};
\node[vertex] (6) at (4,2) {};

\draw (1) -- (2) -- (3) -- (4) -- (1); \draw (1) -- (3); \draw (2) -- (4);
\draw (4) -- (5);
\draw (3) -- (6);
\draw (5) -- (6);
\draw (1) to[bend angle=20, bend right] (5);
\draw (2) to[bend angle=20, bend left] (6);
\node[] (name) at (-0.3,-1) [label=right:{$F_{12}=\overline{P_3}\star\overline{P_3}$}] {};
\end{tikzpicture}
\qquad
\begin{tikzpicture}[scale=.5]
\tikzstyle{vertex}=[draw,circle,inner sep=1.5pt]
\node[vertex] (1) at (0,0) {};
\node[vertex] (2) at (0,2) {};
\node[vertex] (3) at (2,2) {};
\node[vertex] (4) at (2,0) {};
\node[vertex] (5) at (4,0) {};
\node[vertex] (6) at (4,2) {};
\node[vertex] (7) at (5,1) {};

\draw (1) -- (2) -- (3) -- (4) -- (1); \draw (1) -- (3); \draw (2) -- (4);
\draw (5) -- (6);
\draw (3) -- (7) -- (4);
\draw (5) -- (7) -- (6);
\draw (2) -- (5) -- (4);
\draw (1) -- (6) -- (3);
\draw (1) to[bend angle=20, bend right] (5);
\draw (2) to[bend angle=20, bend left] (6);
\node[] (name) at (-.8,-1) [label=right:{$F_{13}=(K_2+K_1)\star C_4$}] {};
\end{tikzpicture}
\quad
\begin{tikzpicture}[scale=.5]
\tikzstyle{vertex}=[draw,circle,inner sep=1.5pt]
\node[vertex] (1) at (0,1) {};
\node[vertex] (2) at (0,2) {};
\node[vertex] (3) at (1,3) {};
\node[vertex] (4) at (2,3) {};
\node[vertex] (5) at (3,2) {};
\node[vertex] (6) at (3,1) {};
\node[vertex] (7) at (2,0) {};
\node[vertex] (8) at (1,0) {};

\draw (1) -- (3); \draw (1) -- (4); \draw (1) -- (5); \draw (1) -- (6); \draw (1) -- (7); \draw (1) -- (8);
\draw (2) -- (3); \draw (2) -- (4); \draw (2) -- (5); \draw (2) -- (6); \draw (2) -- (7); \draw (2) -- (8);

\draw (3) -- (5); \draw (3) -- (6); \draw (3) -- (7); \draw (3) -- (8);
\draw (4) -- (5); \draw (4) -- (6); \draw (4) -- (7); \draw (4) -- (8);

\draw (5) -- (7); \draw (5) -- (8);
\draw (6) -- (7); \draw (6) -- (8);

\node[] (name) at (-.8,-1) [label=right:{$F_{14}=C_4\star C_4$}] {};
\end{tikzpicture}
\end{center}
\caption{Forbidden induced subgraphs for $3$-probe complete graphs.}
\label{fig:3-probecomplete}
\end{figure}
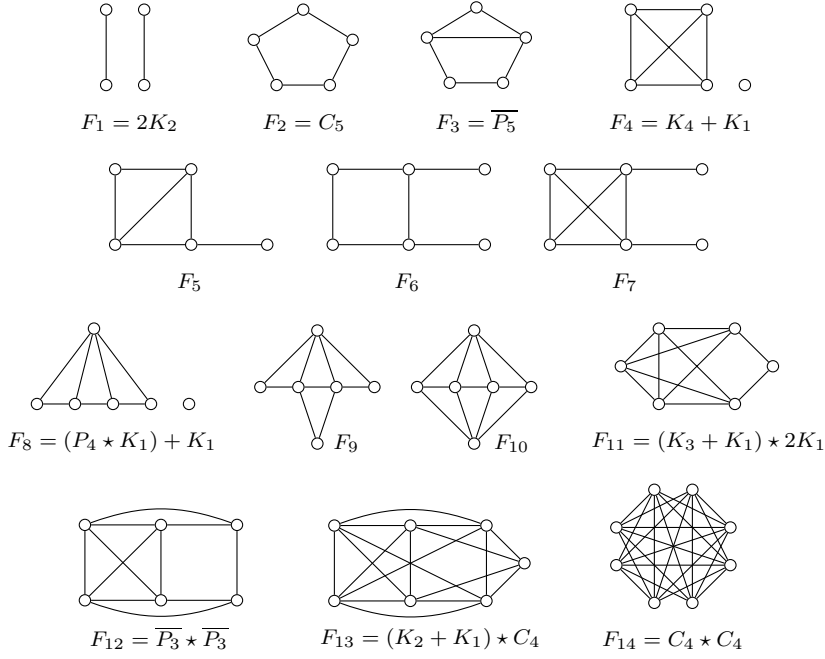
\begin{proof} By Lemma~\ref{lem:Gk} it remains to prove the equivalence of (i) and (ii).

\medskip
\noindent
(i) $\Rightarrow$ (ii):
By inspection one can easily see that none of the graphs depicted in Fig.\ref{fig:3-probecomplete} is a $3$-probe complete graph. Thus, no $3$-probe complete graph contains any of these graphs as an induced subgraph.

\medskip
\noindent
(ii) $\Rightarrow$ (i): Let $G$ be a $(F_1,\ldots, F_{14})$-free graph. Let $Q$ be the set of all universal vertices of $G$. As $G$ is $F_1$-free, $G-Q$ has at most one non-trivial connected component. Let $H$ be the non-trivial connected component of $G-Q$ (if $H$ does not exist, $G$ is a $1$-probe complete graph and we are done), and let $I$ be the set of all isolated vertices of $G-Q$.

We distinguish two cases; note that, as $G$ is $F_1$-free, $G$ is particularly $P_5$-free.

\medskip\noindent
{\em Case 1.\, $H$ contains an induced $P_4$.}\, Let $P= v_1v_2v_3v_4$ be an induced $P_4$ in $H$ with edges $v_1v_2, v_2v_3$ and $v_3v_4$. For each $S\subseteq\{1,2,3,4\}$ write
$$
M_S=\left\{v\mid v\in V(H)\setminus V(P), N(v)\cap V(P)=\{v_i\mid i\in S\}\right\},
$$
that is, $M_S$ consists of all vertices of $H$ outside $P$ adjacent exactly to $v_i, i\in S$. By definition, $M_S\cap M_{S'}=\emptyset$ whenever $S\not=S'$. To simplify the notion, we also write $M_0$ for $M_\emptyset$, $M_3$ for $M_{\{3\}}$ and $M_{124}$ for $M_{\{1,2,4\}}$ and so on. We have the following facts.
\begin{itemize}
\item $M_1=M_4=M_{12}=M_{34}=M_{14}=\emptyset$. This is because $G$ is $(F_1,F_2)$-free.
\item $M_{124}=M_{134}=M_{123}=M_{234}=\emptyset$. This is because $G$ is $(F_3,F_5)$-free.
\item $M_{1234}=\emptyset$. Assume that $M_{1234}\not=\emptyset$. Then $M_{1234}$ is a clique (as $G$ is $F_{10}$-free), and every vertex in $M_{1234}$ is adjacent to all vertices in $M_2\cup M_3\cup M_{13}\cup M_{24}\cup M_{23}\cup M_0\cup I$ (as $G$ is $(F_5,F_3,F_9,F_8)$-free). But then $M_{1234}\subseteq Q$, a contradiction.
\end{itemize}
Thus, $V(H)=M_0\cup M_2\cup M_3\cup M_{13}\cup M_{24}\cup M_{23}$. Moreover,
\begin{itemize}
\item $M_0, M_2, M_3, M_{13}, M_{24}$ and $M_{23}$ are independent sets. This is because $G$ is $(F_1,F_5,F_7)$-free.
\end{itemize}
For two disjoint sets $U,W$ of vertices, we write $U\join W$, respectively $U\co-join W$, to describe the fact that every vertex in $U$ is adjacent, respectively non-adjacent, to every vertex in $W$ and vice versa. We have the following facts.
\begin{itemize}
\item $M_0\co-join (M_2\cup M_3)$ and $M_0\co-join (M_{13}\cup M_{24})$. These are because $G$ is $(F_1,F_6)$-free.
\item $M_0\join M_{23}$. Since $M_0$ is independent and $M_0\co-join (M_2\cup M_3\cup M_{13}\cup M_{24})$, the connectedness of $H$ implies that each vertex in $M_0$, if any, must have a neighbor in $M_{23}$. Since $M_{23}$ is independent and $G$ is $F_5$-free, each vertex in $M_0$ therefore is adjacent to all vertices in $M_{23}$.
\item $M_2\co-join M_3$, $M_2\join M_{13}$ (as $G$ is $F_6$-free) and $M_2\co-join (M_{23}\cup M_{24})$ (as $G$ is $(F_5, F_3)$-free).
\item $M_3\join M_{24}$ and $M_3\co-join (M_{23}\cup M_{13})$. These are obtained by symmetry.
\item $M_{13}\join M_{24}$. This is because $G$ is $F_1$-free.
\end{itemize}
Thus, the three independent sets
\begin{eqnarray*}
\N_1:= & M_2\cup\{v_1\}\cup M_3\cup\{v_4\}\cup M_{23}\cup I,\\
\N_2:= & M_0\cup M_2\cup\{v_1\}\cup M_{24}\cup\{v_3\}\cup I,\\
\N_3:= & M_0\cup M_3\cup\{v_4\}\cup M_{13}\cup\{v_2\}\cup I,
\end{eqnarray*}
form a complete witness for $G$, and Case 1 is settled.

\medskip\noindent
{\em Case 2.\, $H$ is $P_4$-free.}\, That is, $H$ is a cograph. It is a well-known fact that any connected cograph is the join of two smaller cographs (see, {\em e.g.}, \cite{BraLeSpi}). This fact immediately implies that any connected $C_4$-free cograph has a universal vertex.

Now, let $H=H_1\star H_2$. Then $H_1$ or $H_2$ is edgeless. To see this, suppose the contrary that both $H_1$ and $H_2$ have some edges. Then $I=\emptyset$ (otherwise $G$ would have an $F_4$), hence neither $H_1$ nor $H_2$ can have a universal vertex (otherwise $G-Q$ would have a universal vertex). Moreover, $H_1$ or $H_2$ must be connected (otherwise both $H_1$ and $H_2$ would have a $\overline{P_3}$, and $G$ would have an $F_{12}$). Let, say, $H_1$ be connected. Then, as $H_1$ has no universal vertex, $H_1$ has a $C_4$. Now, if $H_2$ is disconnected, then $G$ has an $F_{13}$. If $H_1$ is connected, then, as $H_2$ has no universal vertex, $H_2$ has a $C_4$. But then $G$ has an $F_{14}$. This contradiction shows that $H_1$ or $H_2$ must be edgeless, as claimed. Say, without loss of generality,
\begin{equation*}
\text{$H_1$ is edgeless.}
\end{equation*}
We distinguish two cases.

\medskip\noindent
{\em Case 2.1.\, $I=\emptyset$.}\, Then the independent set $V(H_1)$ has at least two vertices (otherwise the vertex of $H_1$ would be a universal vertex of $G$). Hence $H_2$ is
\begin{itemize}
\item $(K_3+K_1)$-free (otherwise $G$ would have an $F_{11}$),
\item $(K_2+K_1)\star 2K_1$-free (otherwise $G$ would have an $F_{13}$), and
\item $(C_4\star 2K_1$)-free (otherwise $G$ would have an $F_{14}$).
\end{itemize}
Thus, by Theorem~\ref{thm:2-probecomplete}, $H_2$ is a $2$-probe complete graph. Let $\N_1, \N_2$ be a complete witness for $H_2$. Then $\N_1, \N_2$ and $\N_3:=V(H_1)$ clearly from a complete witness for $G$.

\medskip\noindent
{\em Case 2.2.\, $I\not=\emptyset$.}\, Then $H_2$ is $K_3$-free (otherwise a $K_3$ in $H_2$, a vertex in $H_1$ and a vertex in $I$ would induce an $F_4$). By Theorem~\ref{thm:2-probecomplete}, $H_2$ is a $2$-probe complete.

Suppose first that $H_2$ has a universal vertex $v$. Then $V(H_2)\setminus\{v\}$ is an independent set, and $\N_1:=V(H_1)\cup I$, $\N_2:= (V(H_2)\setminus\{v\})\cup I$ and $\N_3:=I\cup\{v\}$ clearly form a complete witness for $G$.

Suppose next that $H_2$ has no universal vertex. Recall that $H_2$ is a $2$-probe complete graph, and consider a complete witness $\N_1, \N_2$ for $H_2$. Since $H_2$ has no universal vertex, any vertex of $H_2$ must belong to $\N_1$ or $\N_2$. Thus, $\N_1':= \N_1\cup I$, $\N_2':= \N_2\cup I$ and $\N_3:= V(H_1)\cup I$ clearly form a complete witness for $G$.

\medskip
Case 2 is settled, and the proof of Theorem~\ref{thm:3-probecomplete} is complete.
\qed
\end{proof}

We note that, by using modular decomposition (see, {\em e.g.}, \cite{HabPau,TCHP}), one can recognize graphs obtained from the Net by substituting vertices by independent sets in linear time. Hence Theorem~\ref{thm:3-probecomplete} gives a linear time recognition for 3-probe complete graphs.

\section{Conclusion}\label{sec:conclusion}
In this paper we have shown that \textsc{complete width} is NP-complete on $3K_2$-free bipartite graphs (equivalently, \textsc{edge clique cover} is NP-complete on $\overline{3K_{2}}$-free co-bipartite graphs). So, an obvious open question is: What is the computational complexity of \textsc{complete width} on $2K_2$-free graphs? Equivalently, what is the computational complexity of \textsc{edge clique cover} on $C_4$-free graphs? We have given partial results in this direction by showing that \textsc{complete width} is polynomially solvable on $(2K_2,K_3)$-free graphs and on $(2K_2,C_4)$-free graphs. (Equivalently, \textsc{edge clique cover} is polynomially solvable on $(C_4, 3K_1)$-free graphs and on $(C_4, 2K_2)$-free graphs.)

Another interesting question is the following. The time complexities of many problems coincide on split graphs and bipartite graphs, \emph{e.g.}, the dominating set problem. However, for the complete width problem, they are different, one is in P and the other is in NP-complete. Trees are a special class of bipartite graphs. Many problems become easy on trees. However, we do not know the hardness of the complete width problem on trees.



\begin{thebibliography}{}
%
%
\bibitem{BHPT}
Bl\'azsik, Z., Hujter, M., Pluh\'ar, A., Tuza, Z.:
Graphs with no induced $C_4$ and $2K_2$.
{\em Discrete Mathematics} 115, 51--55 (1993)

\bibitem{BraLeSpi}
Brandst\"adt, A., Le, V.B., Spinrad, J.P.:
 \emph{Graph Classes: A Survey}.
 SIAM Monographs on Discrete Mathematics and Applications, Philadelphia (1999)

\newcommand{\mytilde}{\raise.17ex\hbox{$\scriptstyle\mathtt{\sim}$}}

\bibitem{ChaChaKloPen}
Chandler, D.B., Chang, M.-S., Kloks, T., Peng, S.-L.:
Probe Graphs. Manuscript,
http:/\!/www.cs.ccu.edu.tw/\mytilde hunglc/ProbeGraphs.pdf (2009)

\bibitem{ChaHunKloPen}
Chang, M.-S., Hung, L.-J., Kloks, T., Peng, S.-L.:
Block-graph width.
 {\em Theoretical Computer Science} 412, 2496--2502 (2011)

\bibitem{CHL}
Chang, M.-S., Kloks, T., Liu, C.-H.:
Edge-clique graphs of cocktail parties have unbounded rankwidth.
\textrm{arXiv:1205.2483 [cs.DM]} (2012)

\bibitem{CM}
Chang, M.-S., M\"uller, H.:
On the tree-degree of graphs.
In: Proc. 27th WG. LNCS 2204, 44--54 (2008)

\bibitem{CFKLMPPS}
 Cygan, M., Fomin, F.V., Kowalik, L., Lokshtanov, D., Marx, D., Pilipczuk, M., Pilipczuk, M., Saurabh, S.:
{\em Parameterized Algorithms}.
Springer (2015)

\bibitem{CKPPW}
Cygan, M., Kratsch, S., Pilipczuk, M., Pilipczuk, M., Wahlstr\"om, M.:
Clique cover and graph separation: New incompressibility results.
{\em ACM Transaction on Computation Theory} 6 (2014)

\bibitem{CPP}
Cygan, M., Pilipczuky, M., Pilipczuk, M.:
Known algorithms for EDGE CLIQUE COVER are probably optimal.
In: Proc. SODA, 1044--1053 (2013)

\bibitem{FolHam}
Foldes, S., Hammer, P.L.:
 Split graphs.
{\em Congressus Numerantium}, No. XIX, 311--315 (1977)

\bibitem{Golumbic}
Golumbic, M.C.:
\emph{Algorithmic Graph Theory and Perfect Graphs} (Second edition).
Annals of Discrete Math. 57, Elsevier, Amsterdam (2004)

\bibitem{GolTre}
Golumbic, M.C, Trenk, A.N.: {\em Tolerance Graphs\/}. Cambridge studies in advanced mathematics 89, New York (2004)

\bibitem{GGHN}
Gramm, J., Guo, J., H\"uffner, F., Niedermeier, R.:
Data reduction and exact algorithms for clique cover.
{\em ACM Journal of Experimental Algorithmics} 13, Article 2.2 (2008)

\bibitem{HabPau}
Habib, M., Paul, C.:
A survey of the algorithmic aspects of modular
decomposition.
{\em Computer Science Review} 4, 41--59 (2010)

\bibitem{HamPelSun}
 Hammer, P.L., Peled, U.N., Sun, X.:
 Difference graphs.
 {\em Discrete Applied Mathematics} 28, 35--44 (1990)

\bibitem{Holyer}
Holyer, I.:
The NP-completeness of some edge-partition problems.
\emph{SIAM Journal on Computing} 4, 713--717 (1981)

\bibitem{Hoover}
Hoover, D.N.:
Complexity of graph covering problems for graphs of low degree.
{\em Journal of Combinatorial Mathematics and Combinatorial Computing} 11, 187--208 (1992)

\bibitem{HsuTsa}
Hsu, W.-L., Tsai, K.-H.:
Linear time algorithms on circular-arc graphs.
{\em Inf. Process. Lett.} 40, 123--129 (1991)

\bibitem{KSW}
Kou, L.T., Stockmeyer, L.J., Wong, C.K.:
Covering edges by cliques with regard
to keyword conflicts and intersection graphs.
{\em Comm. ACM} 21, 135--139 (1978)

\bibitem{LePeng12}
Le, V.B., Peng, S.-L.:
 Characterizing and recognizing probe block graphs.
 {\em Theoretical Computer Science} 568, 97--102 (2015)

\bibitem{LePeng14}
Le, V.B., Peng, S.-L.:
Good characterizations and linear time recognition for 2-probe block graphs.
In: {\em Proceedings of the International Computer Symposium}, Taichung, Taiwan, December 12-14, 2014. IOS Press, 22--31 (2015)
doi:10.3233/978-1-61499-484-8-22

\bibitem{cocoon2015}
Le, V.B., Peng, S.-L.:
On the complete width and edge clique cover problems.
In: {\em Proceedings of the 21st International Conference COCOON 2015}.
Lecture Notes in Computer Science 9198, 537--547 (2015)

\bibitem{MWW}
Ma, S., Wallis, W.D., Wu, J.:
 Clique covering of chordal graphs.
 \emph{Utilitas Mathematica} 36, 151--152 (1989)

\bibitem{MafPre}
Maffray, F., Preissmann, M.:
Linear recognition of pseudo-split graphs.
{\em Discrete Applied Mathematics} 52, 307--312 (1994)

\bibitem{MahPel}
Mahadev, N.V.R., Peled, U.N.:
{\em Threshold Graphs and Related Topics}.
Annals of discrete mathematics 56, Elsevier, Amsterdam (1995)

\bibitem{Muller}
M\"uller, H.:
On edge perfectness and classes of bipartite graphs.
\emph{Discrete Math.} 149, 159--187 (1996)

\bibitem{Orlin}
Orlin, J.:
Contentment in graph theory: covering graphs with cliques.
\emph{Indagationes Mathematicae} 80, 406--424 (1977)


\bibitem{Pullman}
Pullman, N.J.:
Clique covering of graphs IV. Algorithms.
\emph{SIAM Journal on Computing} 13, 57--75 (1984)

\bibitem{Ray}
Raychaudhuri, A.:
Intersection number and edge clique graphs of chordal and strongly chordal graphs.
\emph{Congressus Numer.} 67, 197--204 (1988)

\bibitem{TCHP}
Tedder, M., Corneil, D., Habib, M., Paul, C.:
Simpler linear-time modular decomposition
via recursive factorizing permutations. In: Automata, languages and
programming. Lecture Notes in Comput. Sci. 5125, 634--645 (2008)

\bibitem{Yannakakis}
 Yannakakis, M.:
 Node-delection problems on bipartite graphs.
 {\em SIAM Journal on Computing} 10, 310--327 (1981)
\end{thebibliography}


\end{document}